\theoremstyle{definition}
\newtheorem{definition}{Definition}
\newtheorem{theorem}{Theorem}
\newtheorem{example}{Example}
\newcolumntype{C}{>{$}c<{$}} 
\newcommand{\hStatex}[0]{\vspace{4pt}}
\newcommand{\set}[1]{\{#1\}}
\newcommand{\comp}{\;;}
\newcommand{\teal}[1]{\textcolor{teal}{#1}}
\newcommand{\purple}[1]{\textcolor{purple}{#1}}
\newcommand{\op}{\texttt{op}}
\newcommand{\rval}{\texttt{rval}}
\newcommand{\ctxt}{\texttt{ctxt}}
\newcommand{\eval}{\texttt{eval}}
\newcommand{\Val}{\textsl{Val}}
\newcommand{\Op}{\textsl{Op}}
\newcommand{\Level}{L}
\newcommand{\intreg}{\textsf{reg}}
\newcommand{\rd}{\texttt{rd}}
\renewcommand{\wr}{\texttt{wr}}
\newcommand{\cvc}{\texttt{cvc}}
\newcommand{\ok}{\texttt{ok}}
\newcommand{\pvc}{\texttt{pvc}}
\newcommand{\css}{\texttt{gsvc}}
\newcommand{\clock}{\texttt{clock}}
\newcommand{\updates}{\texttt{updates}}
\newcommand{\store}{\texttt{store}}
\newcommand{\pmc}{\texttt{PMC}}
\newcommand{\readkey}{\textsc{get}}
\newcommand{\updatekey}{\textsc{put}}
\newcommand{\readrequest}{\textsc{get-req}}
\newcommand{\updaterequest}{\textsc{put-req}}
\newcommand{\propagate}{\textsc{propagate-updates}}
\newcommand{\heartbeat}{\textsc{heartbeat}}
\newcommand{\replicate}{\textsc{replicate}}
\newcommand{\bcast}{\textsc{bcast-pvc}}
\newcommand{\updatecss}{\textsc{update-css}}
\newcommand{\send}{{\bf send}\;}
\newcommand{\wait}{{\bf wait}\;}
\newcommand{\until}{{\bf until}\;}
\newcommand{\partition}{\textsc{partition}}
\newcommand{\tcc}{TCC}
\newcommand{\level}{\texttt{lvl}}
\newcommand{\hwvc}{\texttt{hwvc}}
\newcommand{\hrvc}{\texttt{hrvc}}
\newcommand{\pput}{\textsc{put}}
\newcommand{\get}{\textsc{get}}
\newcommand{\tccstore}{\textsc{TCCStore}}
\newcommand{\mr}{\texttt{mr}}
\newcommand{\ryw}{\texttt{ryw}}
\newcommand{\wfr}{\texttt{wfr}}
\newcommand{\mw}{\texttt{mw}}
\newcommand{\rel}[1]{\xrightarrow{#1}}
\newcommand{\so}{\texttt{so}}
\newcommand{\vis}{\texttt{vis}}
\newcommand{\ar}{\texttt{ar}}
\newcommand{\creads}{\textsl{ClientReads}}
\newcommand{\cwrites}{\textsl{ClientWrites}}
\newcommand{\swrites}{\textsl{StableWrites}}
\newcommand{\retval}{\textsc{RVal}}
\newcommand{\ec}{\texttt{ec}}
\newcommand{\cc}{\texttt{cc}}
\begin{document}
\title{Tunable Causal Consistency: Specification and Implementation}



 \author{\IEEEauthorblockN{Xue Jiang, Hengfeng Wei, Yu Huang}
 \IEEEauthorblockA{\textit{State Key Laboratory for Novel Software Technology,
   Nanjing University, China} \\
   xuejiang1225@gmail.com, \{hfwei, yuhuang\}@nju.edu.cn}%
 }

%
%
\maketitle
\begin{abstract}
To achieve high availability and
low latency, distributed data stores often geographically replicate data
at multiple sites called replicas.
However, this introduces the data consistency problem.
Due to the fundamental tradeoffs among consistency, availability,
and latency in the presence of network partition,
no a one-size-fits-all consistency model exists.

To meet the needs of different applications,
many popular data stores provide tunable consistency,
allowing clients to specify the consistency level per individual operation.
In this paper, we propose tunable causal consistency (\tcc).
It allows clients to choose the desired session guarantee for each operation,
from the well-known four session guarantees,
i.e., read your writes, monotonic reads, monotonic writes, and writes follow reads.
Specifically, we first propose a formal specification of \tcc{}
in an extended $(vis, ar)$ framework originally proposed by Burckhardt et al.
Then we design a \tcc{} protocol
and develop a prototype distributed key-value store called \tccstore{}.
We evaluate \tccstore{} on Aliyun.
The latency is less than 38ms for all workloads and
the throughput is up to about 2800 operations per second.
We also show that \tcc{} achieves better performance than causal consistency
and requires a negligible overhead when compared with eventual consistency.

\end{abstract}

\begin{IEEEkeywords}
  Causal Consistency; Tunable Consistency; Session Guarantees
\end{IEEEkeywords}


\section{Introduction} \label{section:intro}

{\emph{Data Consistency Models.}}
To achieve high availability and low latency,
distributed data stores often geographically replicate data
at multiple sites called replicas~\cite{RDT:POPL2014, PoEC:NOW2014}.
However, this introduces the data consistency problem among replicas.
Over the past forty years, more than 50 consistency models
have been proposed~\cite{Baseball:CACM2013, Consistency:CSUR2016},
ranging from linearizability~\cite{Lin:TOPLAS1990}
to eventual consistency~\cite{EC:CACM2009, PoEC:NOW2014}.
According to the CAP theorem~\cite{CAP:SIGACTNews2002, CAP:PODC2000}
and the PACELC tradeoff~\cite{PACELC:Computer2012},
there are fundamental tradeoffs among consistency, availability,
and latency in the presence of network partition.
Therefore, no a one-size-fits-all consistency model exists~\cite{Baseball:CACM2013}.

{\emph{Tunable consistency.}}
To meet the needs of different applications,
many popular distributed data stores begin to provide
tunable consistency~\cite{DynamoDB:SOSP2007, Cassandra:SIGOPS2010, MongoDB:VLDB2019, CosmosDB:2018},
allowing clients to specify the consistency level per individual operation.
Amazon DynamoDB provides eventual consistency as default
and strong consistency with \texttt{ConsistentRead}~\cite{DynamoDB:SOSP2007}.
Apache Cassandra offers a number of fine-grained read and write consistency levels,
such as \texttt{ANY}, \texttt{ONE}, \texttt{Quorum}, and \texttt{ALL}~\cite{Cassandra:SIGOPS2010}.\footnote{
How is the consistency level configured? \url{https://docs.datastax.com/en/cassandra-oss/3.0/cassandra/dml/dmlConfigConsistency.html}}
For example, a read of level \texttt{Quorum} returns the value
after a quorum of replicas have responded.
MongoDB provides tunable consistency by exposing
the \emph{writeConcern} and \emph{readConcern} parameters
that can be set per operation~\cite{MongoDB:VLDB2019}.
For example, a read with \emph{readConcern} = \texttt{majority}
guarantees that the returned value has been written to a majority of replicas.
Azure Cosmos DB offers five well-defined consistency levels for read operations,
namely, from strongest to weakest, Strong, Bounded staleness, Session, Consistent prefix,
and Eventual~\cite{CosmosDB:2018}.\footnote{Consistency levels in Azure Cosmos DB. \url{https://docs.microsoft.com/en-us/azure/cosmos-db/consistency-levels}}
For example, Strong consistency offers linearizability~\cite{Lin:TOPLAS1990},
while Eventual consistency provides no ordering guarantees for reads.

{\emph{Causal Consistency and Session Guarantees.}}
Causal consistency guarantees that an update does not
become visible to clients until all its causal dependencies
are visible~\cite{CM:DC1995, PoEC:NOW2014, CC:PPoPP2016,
COPS:SOSP2011, GentleRain:SOCC2014, Eiger:NSDI2013, Cure:ICDCS2016, MongoDB:VLDB2019}.
Consider the classic ``Lost-Ring'' example~\cite{COPS:SOSP2011}.
Alice first posts ``I lost my ring'', and then posts ``I have found it'' after a while.
Bob sees both posts of Alice, and comments ``Glad to hear it''.
Causal consistency can avoid the undesired situation that
Charlie sees the comment by Bob but do not see the second post of Alice.
Causal consistency has been shown equivalent to
the conjunction of the four well-known \emph{session guarantees}~\cite{Session:PDP2004},
i.e., read your writes (\ryw), monotonic reads (\mr),
monotonic writes (\mw), and writes follow reads (\wfr)~\cite{SESSION:PDIS1994}.
\ryw{} ensures that any write becomes
visible to the subsequent reads in the same session.
\mr{} requires successive reads on the
same session observe monotonically increasing sets of writes.
\mw{} requires writes on the same
session take effect in the session order.
Finally, \wfr{} establishes causality between
two writes $w_1$ and $w_2$ via a read $r_1$,
if $r_1$ reads from $w_1$ and $w_2$ follows $r_1$ on the same session.

{\emph{Tunable Causal Consistency: Motivation}.}
Terry has demonstrated that it is desirable for different participants
of a baseball game to maintain the score with different session guarantees~\cite{Baseball:CACM2013}.
For example, \ryw{} is sufficient for the official scorekeeper
to retrieve the latest score before producing a new one,
while a radio reporter may need \mr{} to ensure that the observed baseball scores are monotonically increasing.
Therefore, it would be beneficial for a distributed data store to provide tunable causal consistency,
i.e., allowing clients to choose the session guarantee per individual operation.

{\emph{Tunable Causal Consistency: Related Work}.}
As far as we know, NuKV is the only distributed key-value store
that provides tunable causal consistency~\cite{Raft:DCN2019}.
However, it has two major drawbacks in terms of specification and implementation.
\begin{itemize}
  \item \emph{Drawback in Specification.}
    NuKV lacks a formal specification of tunable causal consistency.
	Instead, NuKV uses implementation itself as specification.
	Specifically, it first defines three write sets to track
	the writes committed at each server,
	the writes issued by each client,
	and the writes observed by each client, respectively.
	Then the four session guarantees are defined as different constraints
	on these three write sets.
  \item \emph{Drawback in Implementation.}
    NuKV implements \emph{per-key} session guarantees.
	It keeps track of session guarantees for each partition,
	being able to avoid the problem of slowdown cascades cross keys
	maintained by different partitions.
	However, per-key session guarantees are insufficient for some applications.
	For example, the ``Lost-Ring'' example mentioned above
	involves two keys representing Alice's and Bob's posts, respectively.
	To avoid the undesired situation,
	causality should be established across these two keys.
\end{itemize}

In this paper, we propose a formal specification and a general implementation of
tunable causal consistency (\tcc).
\begin{itemize}
  \item {\emph{Our First Contribution: Specification (Section~\ref{section:tcc}).}}
	We formally specify \tcc{} in an extended $(vis, ar)$ framework
	originally proposed by Burckhardt et al.~\cite{RDT:POPL2014, PoEC:NOW2014}.
	The visibility relation $vis$ specifies, for each operation,
	the set of operations that are visible to it.
	The arbitration relation $ar$ indicates
	how the system resolves the conflicts due to concurrent operations
	that are not visible to each other.
	To specify \tcc, we extend the $(vis, ar)$ framework by
	adding a \emph{consistency level} for each operation.
	Then we formalize \tcc{} in two steps.
	First, we define the visibility relation for each individual session guarantee.
	Then we consider the multi-level constraints,
	which specify how one session guarantee influences another
	in terms of the visibility and arbitration relations.
	We find that in \tcc{} any two session guarantees do \emph{not} impose any constraints on each other.
  \item \emph{Our Second Contribution: Implementation (Sections~\ref{section:algorithm} and \ref{section:exp}).}
	We design a \tcc{} protocol which is general in the sense that it supports multiple keys.
	The protocol uses vector clocks to track dependencies for each kind of session guarantees.
	Then, we implement a prototype distributed key-value store called \tccstore,
	which provides \tcc{} in the common sharded cluster deployment.
	We evaluate the performance of \tccstore{} on Aliyun.\footnote{Aliyun. \url{https://cn.aliyun.com/index.htm}}
	The latency is less than 38ms for all workloads and
	the throughput is up to about 2800 operations per second.
	We also show that TCC achieves better performance
	than causal consistency and requires a negligible overhead when
	compared with eventual consistency.
\end{itemize}


\section{Tunable Causal Consistency} \label{section:tcc}

In this section we formally specify \tcc{} in the $(vis, ar)$ framework~\cite{RDT:POPL2014, PoEC:NOW2014}.
We follow the description of the $(vis, ar)$ framework in~\cite{Framework:SRDS2020}
and extend it to support tunable consistency as needed.
As discussed in Section~\ref{section:intro},
we consider the set $\Level \triangleq \set{\ryw, \mr, \mw, \wfr}$ of four session guarantees.
We distinguish the set $\Level_{r} \triangleq \set{\ryw, \mr}$
of two session guarantees constraining reads
from that $\Level_{w} \triangleq \set{\mw, \wfr}$
of the other two session guarantees constraining writes.

\subsection{Relations and Orderings} \label{ss:relations}

Given a set $A$, a binary relation $R$ over $A$ is a subset of $A \times A$,
i.e., $R \subseteq A \times A$.
For $a, b \in A$, we use $a \rel{R} b$ to denote $(a, b) \in R$.
We use $R^{-1}$ to denote the inverse relation of $R$,
i.e., $(a, b) \in R \iff (b, a) \in R^{-1}$.
We define $R^{-1}(b) \triangleq \set{a \in A \mid (a, b) \in R}$.

For two relations $R$ and $S$ over $A$, their composition is
$R \comp S \triangleq \set{(a, c) \mid \exists b \in A: a \rel{R} b \land b \rel{S} c}$.
For some subset $A' \subseteq A$, the restriction of $R$ to $A'$ is
$R|_{A'} \triangleq R \cap (A' \times A)$.
Let $f: A \to B$ be a function from $A$ to $B$ and $A' \subseteq A$.
The restriction of $f$ to $A'$ is
$f|_{A'} \triangleq f \cap (A' \times B) = \set{(a, f(a)) \mid a \in A'}$.

A relation is called a (strict) partial order when it is irreflexive and transitive.
A relation which is a partial order and total is called a total order.
\subsection{Read/Write Registers} \label{ss:registers}

We focus on the key-value store which maintains
a collection of integer read/write (or named get/put) registers.
An integer (read/write) register supports two operations:
$\wr(v)$ writes value $v \in \mathbb{Z}$ to the register,
and $\rd$ reads value from the register.
We use $\bot$ to indicate that writes return no values.
Let $\Op = \set{\wr, \rd}$ and $\Val = \mathbb{Z} \cup \set{\bot}$.

The sequential semantics of registers is defined by
a function $\eval_{\intreg}: \Op^{\ast} \times \Op \to \Val$ that,
given a sequence of operations $S$ and an operation $o$,
determines the return value $\eval_{\intreg}(S, o) \in \Val$ for $o$
when $o$ is performed after $S$~\cite{PoEC:NOW2014}.
If $o$ is a $\rd$ operation, it returns the value of the last preceding $\wr$,
or the initial value 0 if there are no prior writes~\cite{Framework:SRDS2020}.
Formally, for any operation sequence $S$,
\begin{align*}
\eval_{\intreg}(S, \wr(v)) &= \bot, \\
\eval_{\intreg}(S, \rd) &= v,
	\text{ if } \wr(0)\;S = S_{1}\; \wr(v)\; S_{2} \\
	&\qquad\;\; \text{and } S_{2} \text{ contains no } \wr \text{ operations}.
\end{align*}
\subsection{Histories}  \label{ss:history}

Clients interact with the key-value store by performing operations on keys.
The interactions visible to clients are recorded in a \emph{history}.
To support tunable consistency,
we tag each operation with a consistency level.

\begin{definition}[Histories]  \label{def:history}
  A {\it history} is a tuple $H = (E, \op, \level, \rval, \so)$ such that
  \begin{itemize}
    \item $E$ is the set of all events of operations
      invoked by clients in a single computation;
    \item $\op: E \to \Op$ describes the operation of an event;
    \item $\level: E \to \Level$ specifies the \emph{consistency level}
	  requested by the operation $\op(e)$ of an event $e$;
    \item $\rval: E \to \Val$ describes the value returned by
      the operation $\op(e)$ of an event $e$;
    \item $\so \subseteq E \times E$ is a partial order over $E$, called the \emph{session order}.
      It relates operations within a session in the order they were invoked by clients.
  \end{itemize}
\end{definition}

For a history $H = (E, \op, \level, \rval, \so)$, we define:
\begin{itemize}
  \item $E_r$ is the set of events of all read operations in $H$.
  \item $E_w$ is the set of events of all write operations in $H$.
  \item For a level $l \in L$,
    $E_{l} \triangleq E_w \;\cup\; \{e\in E_r \mid \level(e) = l\}$
    is the set of events of all write operations
    and the read operations with consistency level $l \in \Level$.
  \item For a level $l \in L$, we use $H_{l}$
    to denote the restriction of $H$ to the events $E_{l}$,
	i.e., $H_{l} \triangleq (E_{l}, \op|_{E_{l}}, \level|_{E_{l}}, \rval|_{E_{l}}, \so|_{E_{l}})$.
\end{itemize}

\begin{figure}[t]
  \centering
  \includegraphics[width = 0.45\textwidth]{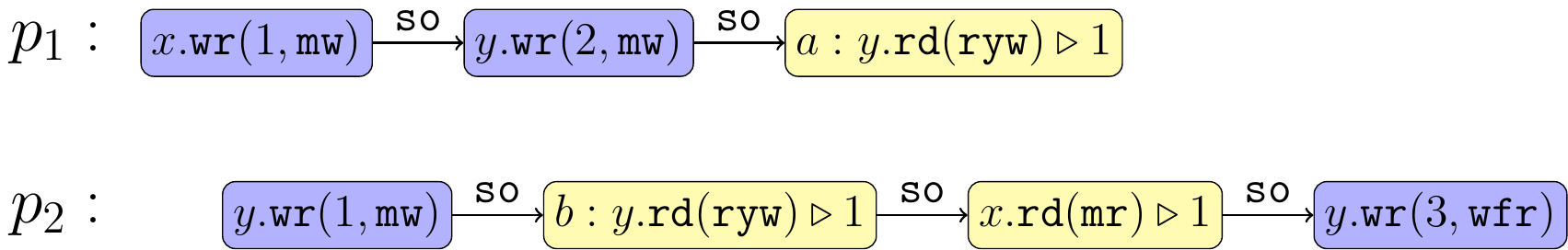}
  \caption{A history consisting of two sessions $p_1$ and $p_2$ and two registers $x$ and $y$.
    Here $r.\wr(v,l) \triangleright \bot$ denotes the operation of writing value $v$
    to register $r$ with consistency level $l$
    (the return value $\bot$ is omitted).
    Similarly, $r.\rd(l) \triangleright v$ denotes the operation of reading value $v$
    from register $r$ with consistency level $l$.
    We use labels, such as $a$ and $b$, to make events unique.}
  \label{fig:exp-register-history}
\end{figure}

\begin{example} 
  \label{ex:history}
  Consider the history in Fig.~\ref{fig:exp-register-history}
  consisting of two sessions $p_1$ and $p_2$ and two registers $x$ and $y$.
  We have $E_{\ryw} = \set{x.\wr(1,\mw), y.\wr(1,\mw), y.\wr(3,\wfr),
  a:y.\rd(\ryw) \triangleright 1, b:y.\rd(\ryw) \triangleright 1}$.
  Moreover, $\so|_{E_{\ryw}}$ is the session order over $E_{\ryw}$,
  including $x.\wr(1,\mw) \rel{\so} y.\wr(2,\mw) \rel{\so} a:y.\rd(\ryw)\triangleright1$
  and $y.\wr(1,\mw) \rel{\so} b:y.\rd(\ryw)\triangleright1 \rel{\so} y.\wr(3,\wfr)$.
\end{example}

\begin{figure}[t]
  \centering
  \includegraphics[width = 0.48\textwidth]{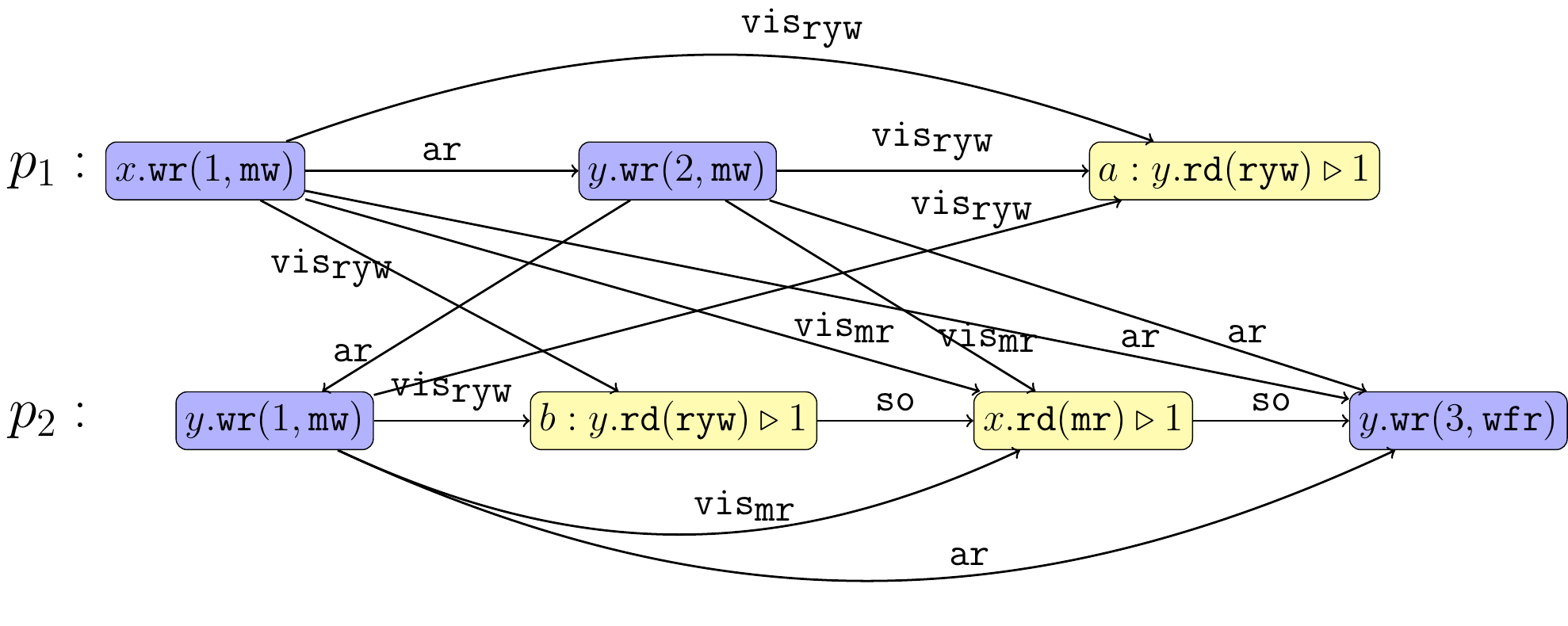}
  \caption{An abstract execution of the history in Fig.~\ref{fig:exp-register-history}.
  Assume that $x.\wr(1,\mw)$ has been applied on replica $R_1$
  when $b:y.\rd(\ryw)$ is executed on $R_1$,
  $y.\wr(2,\mw)$ has been applied on $R_1$
  when $x.\rd(\mr)$ is executed on $R_1$ and
  $y.\wr(2,\mw)$ is applied before $y.\wr(1,\mw)$ on $R_1$.}
  \label{fig:exp-register-execution}
\end{figure}
\subsection{Abstract Executions}  \label{ss:ae}

To justify the return value of an event in a history,
we need to know the set of events that are visible to it
and how these events are ordered.
These are captured declaratively by the \emph{visibility}
and \emph{arbitration} relations, respectively~\cite{RDT:POPL2014, PoEC:NOW2014}.

\begin{definition}[Abstract Executions] \label{def:ae}
  An \emph{abstract execution} is a triple
  $A = ((E, \op, \level, \rval, \so), \vis, \ar)$ such that
  \begin{itemize}
    \item $(E, \op, \level, \rval, \so)$ is a history;
    \item Visibility $\vis \triangleq \bigcup_{l \in \Level_{r}} \vis_{l}$ is an acyclic relation,
	  where $\vis_{l} \subseteq E_{l} \times E_{l}$
	  is the visibility relation for consistency level $l \in L_{r}$;
    \item Arbitration $\ar \subseteq E \times E$ is a total order such that $\vis \subseteq \ar$.
  \end{itemize}
\end{definition}

\begin{example} 
  \label{ex:ae}
  Fig.~\ref{fig:exp-register-execution} shows an abstract execution
  of the history in Fig.~\ref{fig:exp-register-history}.
  Assume that $x.\wr(1,\mw) \rel{\vis_{\ryw}} b:y.\rd(\ryw) \triangleright 1$,
  $y.\wr(2,\mw) \rel{\vis_{\mr}} x.\rd(\mr)\triangleright 1$,
  and $y.\wr(2,\mw) \rel{\ar} y.\wr(1,\mw)$.
  In the following, we informally explain the other visibility
  and arbitration relations required by \tcc.
  \begin{itemize}
    \item For event $b:y.\rd(\ryw)\triangleright1$, \ryw{} requires that
      all writes placed before it in $p_2$ are visible to it.
      To satisfy \ryw, we add the visibility relation
      $y.\wr(1,\mw) \rel{\vis_{\texttt{ryw}}} b:y.\rd(\ryw)\triangleright1$.
      Similarly, we add the other $\vis_{\ryw}$ relations as shown in Fig.~\ref{fig:exp-register-execution}.
    \item For event $x.\rd(\mr)\triangleright1$, \mr{} requires that
      it sees all writes visible to $b:y.\rd(\ryw)\triangleright1$.
      To satisfy \mr, we add the visibility relation
      $y.\wr(1,\mw) \rel{\vis_{\mr}} x.\rd(\mr)\triangleright1$ and
      $x.\wr(1,\mw) \rel{\vis_{\mr}} x.\rd(\mr)\triangleright1$.
    \item For event $y.\wr(2,\mw)$, \mw{} requires that
      it should be performed after $x.\wr(1,\mw)$.
      To satisfy \mw, we add the arbitration relation
      $x.\wr(1,\mw) \rel{\ar} y.\wr(2,\mw)$.
    \item For event $y.\wr(3,\wfr)$, \wfr{} requires that
      it should be performed after all the events visible to $x.\rd(\mr)\triangleright1$.
      To satisfy \wfr, we add the arbitration relations
      $x.\wr(1,\mw) \rel{\ar} y.\wr(3,\wfr)$, $y.\wr(2,\mw) \rel{\ar} y.\wr(3,\wfr)$,
      and $y.\wr(1,\mw) \rel{\ar} y.\wr(3,\wfr)$.
  \end{itemize}
\end{example}
\subsection{Consistency Models} \label{ss:consistency-models}

\begin{definition}[Consistency Models] \label{def:cm}
  A \emph{consistency model} is a set of consistency predicates on abstract executions.
\end{definition}

We write $A \models P$ if the consistency predicate $P$ is true
on the abstract execution $A$.

\begin{definition}[Satisfication (Abstract Execution)] \label{def:sat-ae}
  An abstract execution $A$ \emph{satisfies} consistency model
  $\mathcal{M} = \set{P_1, \dots, P_n}$, denoted $A \models \mathcal{M}$,
  if $A$ satisfies each consistency predicate in $\mathcal{M}$.
  That is, $A \models \mathcal{M} \iff A \models P_1 \land \dots \land A \models P_n$.
\end{definition}

We focus on histories that satisfy some consistency model.

\begin{definition}[Satisfication (History)] \label{def:sat-history}
  A history $H$ \emph{satisfies} consistency model
  $\mathcal{M} = \set{P_1, \dots, P_n}$, denoted $H \models \mathcal{M}$,
  if it can be extended to an abstract execution 
  that satisfies $\mathcal{M}$.
  That is, $H \models \mathcal{M} \iff \exists\; \vis, \ar.\; (H, \vis, \ar) \models \mathcal{C}$.
\end{definition}
\subsection{Return Value Consistency}  \label{ss:retval}

A common consistency predicate is the consistency of return values.
In an abstract execution $A$, the return value of an event $e$
is determined by its \emph{operation context}, denoted $\ctxt_{A}(e)$,
which is the restriction of $A$ to the set $\vis_{\level(e)}^{-1}(e)$
of events visible to $e$ with respect to consistency level $\level(e)$.
Since \ar{} is a total order,
the events in $\ctxt_{A}(e)$ can be ordered into
a sequence to justify $\rval(e)$.

\begin{definition}[Operation Context] \label{def:ctxt-revisited}
  Let $A = ((E, \op, \level, \rval, \so), \vis, \ar)$ be an abstract execution.
  The \emph{operation context} of $e \in E$ in $A$ is defined as
  \[
    \ctxt_{A}(e) \triangleq A|_{{\vis_{\level(e)}^{-1}(e),\op, \vis, \ar}}.
  \]
\end{definition}

Accordingly, the return value consistency (\retval{}) predicate is defined as follows.

\begin{definition}[Return Value Consistency] \label{def:retval-revisited}
  For read/write registers \intreg, the return value consistency predicate
  on an abstract execution $A$ is
  \[
    \retval(\intreg) \triangleq \forall e \in E.\;
      \rval(e) = \eval_{\intreg} \big( \ctxt_{A}, \op(e) \big).
  \]
\end{definition}

\begin{example} 
  \label{exp:rval}
  Consider the abstract execution in Fig.~\ref{fig:exp-register-execution}.
  We show how to justify the return value of the event $a: y.\rd(\ryw)\triangleright 1$.
  First, $\vis_{\ryw}^{-1}(a) = \set{x.\wr(1,\mw), y.\wr(1,\mw), y.\wr(1,\mw)}$.
  Second, we have $x.\wr(1,\mw) \rel{\ar} y.\wr(2,\mw) \rel{\ar} y.\wr(1,\mw)\rangle$.
  Therefore, $a$ can be justified by the operation sequence
  $\langle x.\wr(1,\mw) \; y.\wr(2,\mw) \; y.\wr(1,\mw)\rangle$.
  Similarly, the read events $b: y.\rd(\ryw)\triangleright1$ and
  $x.\rd(\mr)\triangleright 1$ can be justified by
  the operation sequence $\langle x.\wr(1,\mw) \; y.\wr(1,\mw)\rangle$ and
  $\langle x.\wr(1,\mw) \;  y.\wr(2,\mw) \; y.\wr(1,\mw)\rangle$, respectively.
\end{example}


\subsection{Individual Level Constraints}  \label{ss:individual}

We define \tcc{}
in the above extended $(vis, ar)$ framework in two steps:
In this section we define the constraints
specified by the four session guarantees individually.
In the next section we define the multi-level constraints,
specifying how one session guarantee influence another
in terms of the visibility and arbitration relations.

\subsubsection{Read Your Writes} \label{sss:ryw}

\ryw{} ensures that any write becomes visible to the subsequent reads in the same session.

\begin{definition}[Read Your Writes (\ryw)] \label{def:ryw}
  Let $H$ be a history. The constraint $\mathcal{C}^{\ryw}$ for \ryw{} is
  \begin{align*}
    \mathcal{C}^{\ryw} \triangleq \; &\forall e \in E_r, e' \in E_w.\\
      & (\level(e) = \ryw \land e' \rel{\so} e) \implies e' \rel{\vis_{\ryw}} e.
  \end{align*}

Note that write events on other sessions ($\centernot{\rel{\so}} e$)
can also be visible ($\vis_{\ryw}$) to $e$.
For example, as shown in Fig.~\ref{fig:exp-register-execution},
$x.\wr(1,\mw)$ is visible ($\vis_{\ryw}$) to $b:y.\rd(\ryw)\triangleright1$.

\end{definition}
\begin{example} 
  \label{ex:ryw}
  Consider the history in Fig.~\ref{fig:exp-register-execution}.
  For event $b:y.\rd(\ryw) \triangleright 1$ with consistency level \ryw,
  since $y.\wr(1,\mw) \rel{\so} b:y.\rd(\ryw) \triangleright 1$,
  to satisfy $\mathcal{C}^{\ryw}$,
  we have $y.\wr(1,\mw) \rel{\vis_{\ryw}} b:y.\rd(\ryw) \triangleright 1$.
  Similarly, we have $y.\wr(2,\mw) \rel{\vis_{\ryw}} a:y.\rd(\ryw) \triangleright 1$
  and $x.\wr(1,\mw) \rel{\vis_{\ryw}} a:y.\rd(\ryw) \triangleright 1$
  due to $y.\wr(2,\mw) \rel{\so} a:y.\rd(\ryw) \triangleright 1$
  and $x.\wr(1,\mw) \rel{\so} a:y.\rd(\ryw) \triangleright 1$, respectively.
\end{example}
\subsubsection{Monotonic Reads} \label{sss:mr}

\mr{} requires successive reads on the same session
observe monotonically increasing sets of writes.

\begin{definition}[Monotonic Reads (\mr)] \label{def:mr}
Let $H$ be a history. The constraint $\mathcal{C}^{\mr}$ for \mr{} is
  \begin{align*}
    \mathcal{C}^{\mr} \triangleq \;
	  &\forall e_1 \in E_w, e_2 \in E_r, e \in E_r.\\
      &(\level(e) = \mr \land e_1 \rel{\vis} e_2 \rel{\so} e) \implies e_1 \rel{\vis_{\mr}} e.
  \end{align*}
Note that $\vis$ is either $\vis_{\ryw}$ or $\vis_{\mr}$,
and $e_1$ and $e_2$ can be on different sessions.
\end{definition}
\begin{example} 
  \label{ex:mr}
  Consider the history in Fig.~\ref{fig:exp-register-execution}.
  Both $x.\wr(1,\mw)$ and $y.\wr(1,\mw)$ are visible to $b:y.\rd(\ryw)\triangleright 1$.
  Since $b:y.\rd(\ryw)\triangleright 1 \rel{\so} x.\rd(\mr)\triangleright 1$,
  to satisfy $\mathcal{C}^{\mr}$,
  both $x.\wr(1,\mw)$ and $y.\wr(1,\mw)$
  should be visible (via $\vis_{\mr}$) to $x.\rd(\mr)\triangleright 1$.
\end{example}
\subsubsection{Monotonic Writes} \label{sss:mw}

\mw{} requires writes on the same session take effect in the session order.

\begin{definition}[Monotonic Writes (\mw)] \label{def:mw}
  Let $H$ be a history. The constraint $\mathcal{C}^{\mw}$ for \mw{} is
  \begin{align*}
    \mathcal{C}^{\mw} \triangleq \;&\forall e', e \in E_w.\; \\
	  &(\level(e) = \mw \land e' \rel{\so} e) \implies e' \rel{\ar} e.
  \end{align*}
\end{definition}
\begin{example} 
  \label{ex:mw}
  Consider the history in Fig.~\ref{fig:exp-register-execution}.
  Since $x.\wr(1,\mw) \rel{\so} y.\wr(2,\mw)$,
  $\mathcal{C}^{\mw}$ requires $x.\wr(1,\mw) \rel{\ar} y.\wr(2,\mw)$.
\end{example}
\subsubsection{Writes Follow Reads} \label{sss:wfr}

\wfr{} establishes causality between two writes $w_{1}$ and $w_{2}$ via a read $r_{1}$,
if $r_{1}$ reads from $w_{1}$ and $w_2$ follows $r_{1}$ on the same session.

\begin{definition}[Writes Follow Reads (\wfr)] \label{def:wfr}
  Let $H$ be a history.
  The constraint $\mathcal{C}^{\wfr}$ for \wfr{} is
  \begin{align*}
    \mathcal{C}^{\wfr} \triangleq \;&\forall e_1\in E_w, e_2 \in E_r, e \in E_w.\\
      &(\level(e) = \wfr \land e_1 \rel{\vis} e_2 \rel{\so} e) \implies e_1 \rel{\ar} e.
  \end{align*}
Similar to Definition~\ref{def:mr},
$\vis$ is either $\vis_{\ryw}$ or $\vis_{\mr}$,
and $e_1$ and $e_2$ can be on different sessions.
\end{definition}
\begin{example} 
  \label{ex:wfr}
  Consider the history in Fig.~\ref{fig:exp-register-execution}.
  Since $x.\wr(1,\mw) \rel{\vis_{\texttt{ryw}}} b:y.\rd(\ryw) \triangleright 1$
  and $b:y.\rd(\ryw)\triangleright 1 \rel{\so} y.\wr(3,\wfr)$,
  \wfr{} requires $x.\wr(1,\mw) \rel{\ar} y.\wr(3,\wfr)$.
\end{example}
\subsection{Multi-level Constraints} \label{ss:multi-level}

Intuitively, it is not sufficient to
separately satisfy the constraints corresponding to individual levels.
It is also necessary to specify how one level influences another.
For example, Bouajjani et al. provide multi-level constraints
for strong and weak consistency levels in~\cite{Multilevel:VMCAL2020}.

In this section we define the multi-level constraints,
which specify how one session guarantee influence another
in terms of the visibility and arbitration relations.
Consider two events $e'$ of level $l'$ and $e$ of level $l$
such that $e$ immediately follows $e'$ on the same session.
We use $\psi_{l'}^{l}$ to denote the constraint that $l'$ imposes on $l$.

To find the constraints imposed by one session guarantee on another,
we enumerate all pairs of session guarantees.
However, we find that any multi-level constraint has been enforced by
a certain constraint for individual session guarantee.

\begin{definition}[Multi-level Constraints]\label{def:multi-constraint}
  The multi-level constraint for any pair of session guarantees $l', l \in L$
  is $\psi_{l'}^{l} \triangleq \top$.
\end{definition}
That is, $l'$ does not impose any constraints on $l$.
We explain it by example.

\begin{example} 
  \label{exp:multilevel}
  For example, if $l = \ryw$, $e$ needs to observe
  all the writes preceding it on the same session,
  regardless of the level of $e'$.
  Suppose $e'$ is a read and $l = \wfr$.
  The event $e$ should be ordered after all the writes visible to $e'$ in $\ar$.
  This requirement is enforced by $\mathcal{C}^{\wfr}$.
\end{example}

\begin{definition}[Tunable Causal Consistency (\tcc)] \label{def:tcc}
  \[
    \textsc{\tcc} \triangleq \retval(\texttt{reg})
	  \land \bigwedge_{l \in L} \mathcal{C}^{l}
	  \land \bigwedge_{l', l \in L} \psi_{l'}^{l}.
  \]
\end{definition}

\begin{example} 
  \label{ex:tcc}
  According to Examples~\ref{exp:rval}$\sim$\ref{exp:multilevel},
  the history in Fig.~\ref{fig:exp-register-execution} satisfies \tcc.
\end{example}


\section{Protocol} \label{section:algorithm}

We assume a distributed key-value store that maintains multiple data items.
The data is fully replicated across different datacenters.
Inside each datacenter, data is partitioned into several partitions.
All datacenters adopt the same partitioning strategy.
As shown in Fig.~\ref{fig:archi}, we consider a configuration
consisting of $D$ datacenters, each of which consists of $N$ partitions.
We assume that each partition is equipped with a physical clock,
which generates increasing timestamps.
Clocks are loosely synchronized by the classic time synchronization protocol NTP.\footnote{NTP. \url{http://www.ntp.org}}
Client operations can be executed on their local datacenters or remote ones.

\begin{figure}[t]
  \centering
  \includegraphics[width = 0.35\textwidth]{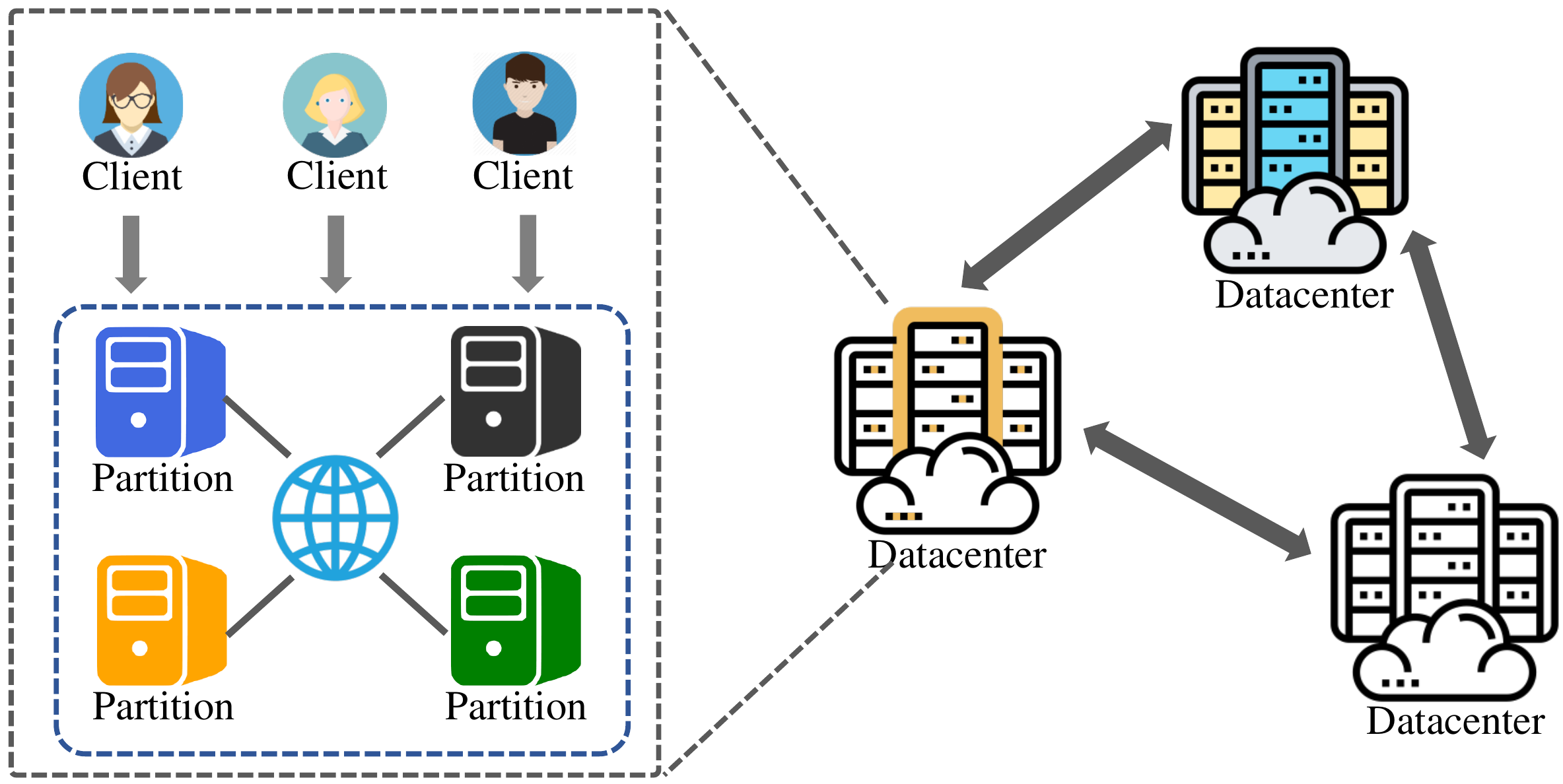}
  \caption{System architecture}\label{fig:archi}
\end{figure}

We assume the data store keeps multiple versions of each data item.
An update on a data item creates a new version of it.
Our protocol offers two operations to the clients:

\begin{enumerate}
  \item $\pput(k,v,l)$: A \pput{} operation assigns value $v$ to an item
        identified by key $k$ while ensuring consistency level $l$.
        If the item does not exist, a new item with value $v$ is created.
        If key $k$ exists, a new version with value $v$ is created.
  \item $\get(k,l)$: A \get{} operation returns the value of the item
        identified by key $k$ while ensuring consistency level $l$.
\end{enumerate}
\subsection{States}\label{ss:states}

Table~\ref{table:notations} provides a summary of notations used in the protocol.


\begin{table}[t]
  \centering
  \caption{Notations used in the protocol description}
  \label{table:notations}
  \renewcommand{\arraystretch}{1.1}
  \begin{tabular}{l||l}
    \hline
    Notations & Description \\
    \hline \hline
    $\hrvc$ & highest read vector clock \\
    $\hwvc$ & highest write vector clock \\
    $\cvc_{c}^{r}$ & vector clock for reads at client $c$ \\
    $\cvc_{c}^{w}$ & vector clock for writes at client $c$ \\
    \hline \hline
     $p^{m}_{d}$ & partition $m$ in data center $d$\\
    $\css_{d}^{m}$ & global stable vector clock at $p_{d}^{m}$ \\
    $\pvc_{d}^{m}$ & vector clock at $p_{d}^{m}$ \\
    $\clock_{d}^{m}$ & physical clock at $p_{d}^{m}$\\
    $\pmc^{m}_{d}$ & matrix of received $\pvc_{d}^{i}$ at $p_{d}^{m}$ \\
    \hline \hline
    $\partition(k)$ & the partition that holds key $k$\\
    \hline \hline
    $k$ & key \\
    $v$ & value \\
    $l$ & consistency level \\
    $vc$ & vector clock \\
    $dvc$ & dependency vector clock \\
    \hline
  \end{tabular}
\end{table}

\subsubsection{Client States} \label{sss:client-states}

Each client $c$ maintains two vector clocks:
the highest read vector clock \hrvc{}
records the maximum timestamp of stable versions read by client $c$,
and the highest write vector clock \hwvc{}
is the maximum timestamp of versions written by client $c$.
The client also maintains two vector clocks $\cvc_{c}^{r}$ and $\cvc_{c}^{w}$
that record the maximum timestamps of versions
read and written by the client, respectively.
\subsubsection{Server States} \label{sss:client-states}

For each partition $p_{d}^{m}$, $\clock_{d}^{m}$ records the value its physical clock.
The server maintains a vector clock $\pvc_{d}^{m}$ of size $D$,
where $\pvc_{d}^{m}[k]$ indicates that $p_{d}^{m}$ has received updates
up to $\pvc_{d}^{m}[k]$ from partition $p_{k}^{m}$.
The server also maintains a stable vector $\css_{c}^{m}$ of size $D$ to
denote the latest globally stable consistent view
of the local datacenter known by $p_{d}^{m}$,
i.e., the view that $p_{d}^{m}$ knows to be available
at all partitions in the local datacenter.
To advance $\css_{d}^{m}$,
partitions in the same datacenter periodically exchange their $\pvc$.
\subsection{Protocol} \label{sss:protocol}

We now informally describe how the \pput{} and \get{} operations
are executed at clients and servers.
We also describe the clock management and replication mechanism.
The pseudocode and correctness proof can be found in Appendix of~\ref{appendix:protocol} and \ref{appendix:proof}.
\subsubsection{$\get(k,l)$} \label{sss:get}

A client $c$ sends a \get{} request, containing the key $k$,
and two vectors $vc_{r}$ and $vc_{w}$
to a server which stores key $k$.
Let $O$ be a $D$ dimensional vector with all entries equal to zero.
When the client sends a \get{} request,
it can use $O$ or $\hrvc$ as $vc_{r}$ to
  guarantee \ec{} (eventual consistency) or \mr{} respectively.
  Similarly, it can use $O$ or $\hwvc$ as $vc_{w}$ to
  guarantee \ec{} or \ryw{} respectively.

When the server $p_d^m$ receives the \get{} request,
it first checks whether the dependencies specified by $vc_r$ and $vc_w$
have been applied locally,
by comparing its $\css$ with $vc_r$ and $vc_w$.
The server blocks if for some $i$ ($i\in D$),
$vc_r[i]$ or $vc_w[i]$ is greater than $\css_{d}^{m}[i]$.
When both $vc_r$ and $vc_w$ are no greater than $\css_{d}^{m}$,
the server retrieves the latest stable version
in the version chain of requested key $k$,
which has an update timestamp no greater than the server's $\css$.
Finally, the value $v$ of the version, its update timestamp,
and $\css_{d}^{m}$ are returned to the client.
Upon receiving the reply,
the client updates its $\hrvc$ and $\cvc_{c}^{r}$ accordingly.

\subsubsection{$\pput(k,v,l)$} \label{sss:put}

A client $c$ sends a \pput{} request, containing the key $k$,
the value $v$, and its dependency time denoted by $dvc$
to the server which stores key $k$.
The client chooses different $dvc$ to provide different session guarantees.
Choosing $O$, $\cvc_{c}^w$ or $\cvc_{c}^r$ guarantees
eventual consistency, \mw{} or \wfr{}, respectively.

When the server $p_d^m$ receives the \pput{} request,
it first checks whether the $d$-th entry of dependency time $dvc$
sent by client is smaller than local physical clock $\clock_{d}^{m}$.
If not, the server will wait until the condition becomes true.
Then, the server updates the $d$-th entry of $dvc$
and $\pvc_{d}^{m}$ with $\clock_{d}^{m}$.
The server creates a new version of the item identified by key $k$
by assigning it a tuple consisting of the key $k$, the value $v$,
and the update timestamp $dvc$, and
inserts it into the version chain of the item
and the set of updates which are waited to be replicated.
Finally, the server sends a reply with the newly created $dvc$ to the client.
Upon receiving the reply,
the client updates its $\hwvc$ and $\cvc_{c}^{w}$ accordingly.
\subsubsection{$Replication$} \label{sss:replication}

Inside a datacenter, each partition periodically replicates the updates
in the set of $\updates$ in $vc[d]$ order
to its replicas at the other datacenters.
When there is no update to replicate, a heartbeat is sent,
which contains its latest clock time.
If a server receives a heartbeat from datacenter $i$,
it updates the $i$-th entry of its $\pvc$.
If it receives a replication request,
it additionally inserts the received new version into
the local corresponding version chain.
However, this update is not visible to clients until
the server's \css{} becomes larger than its update timestamp.
The servers inside each datacenter
exchange their \pvc{} vectors in the background,
and each server $p_d^m$ computes its $\css_{d}^{m}$
as the aggregate minimum of all known $\pvc$.


\section{Experimental Results} \label{section:exp}

We develop a prototype distributed key-value store called \tccstore{} providing \tcc.
In this section we evaluate its performance in terms of throughput and latency,
and investigate the effect of the locality of traffic, workload characteristics, and deployment.
For comparison, we also implement causal consistency and eventual consistency in \tccstore{}.
\subsection{Experimental Setup} \label{ss:exp-setup}

\tccstore{} is implemented in C++.
We use Google Protocol Buffers\footnote{
  Google Protocol Buffers. \url{https://developers.google.com/protocol-buffers/}} for communication.
We conduct our experiments in two datacenters on Aliyun,
located at ShenZhen in South China and HangZhou in East China, respectively.
The average RTT between them is about 27ms.
We run all replicas on machines running Ubuntu 16.04
with 2 vCPUs, 2.5 GHz Intel Xeon (Cascade Lake), 4 GB memory, and 40 GB storage.

We run different number of client threads to generate different load conditions.
We use one machine per datacenter
with the same specification as all replicas
to run client threads.
Our default workload uses the 50:50 read:write ratio and
runs operations on a platform deployed over 2 datacenters (3 partitions per datacenter).
Operations access keys within a partition
according to a uniform distribution.


\begin{figure*}[h]
  \centering
    \begin{minipage}[h]{0.49\linewidth}
    \centering
    \subfigure[Latency]{
      \includegraphics[width = 0.46\textwidth]{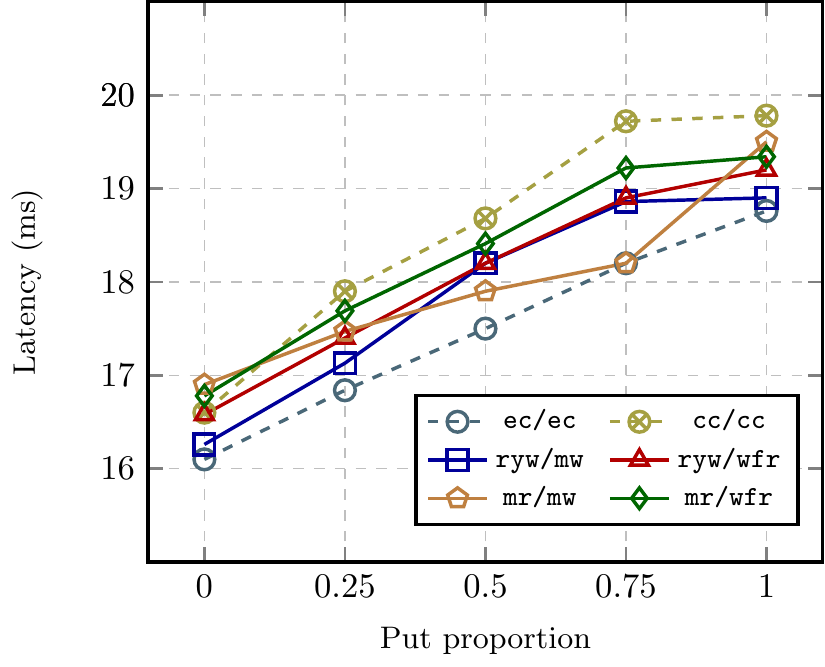}\label{fig:ratio-latency}
    }
    \subfigure[Throughput]{
      \centering
      \includegraphics[width = 0.46\textwidth]{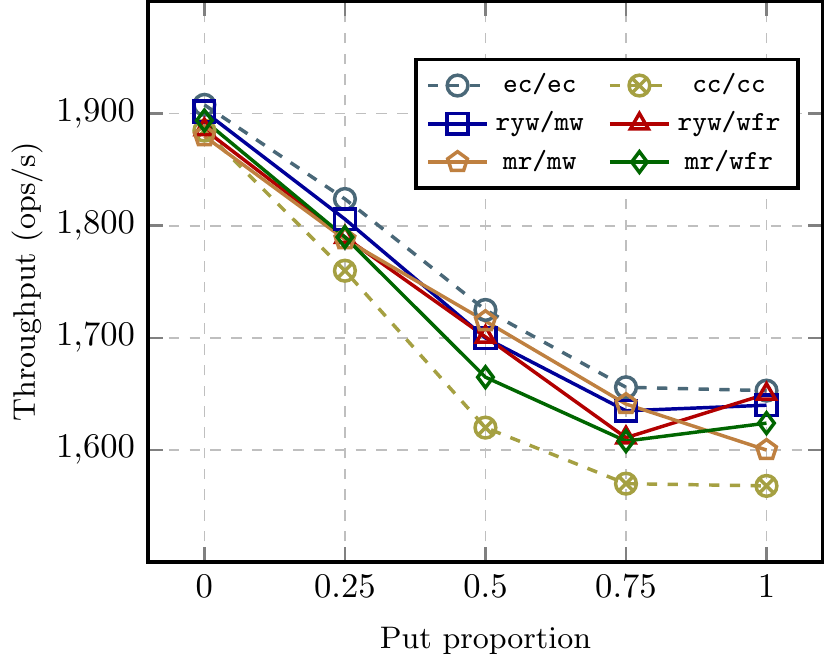}\label{fig:ratio-throughput}
    }
    \caption{The effect of read/write ratio on the latency and throughput with 100\% local traffic}\label{fig:ratio}
    \end{minipage}
    \begin{minipage}[h]{0.49\linewidth}
    \centering
    \subfigure[Latency]{
      \includegraphics[width = 0.46\textwidth]{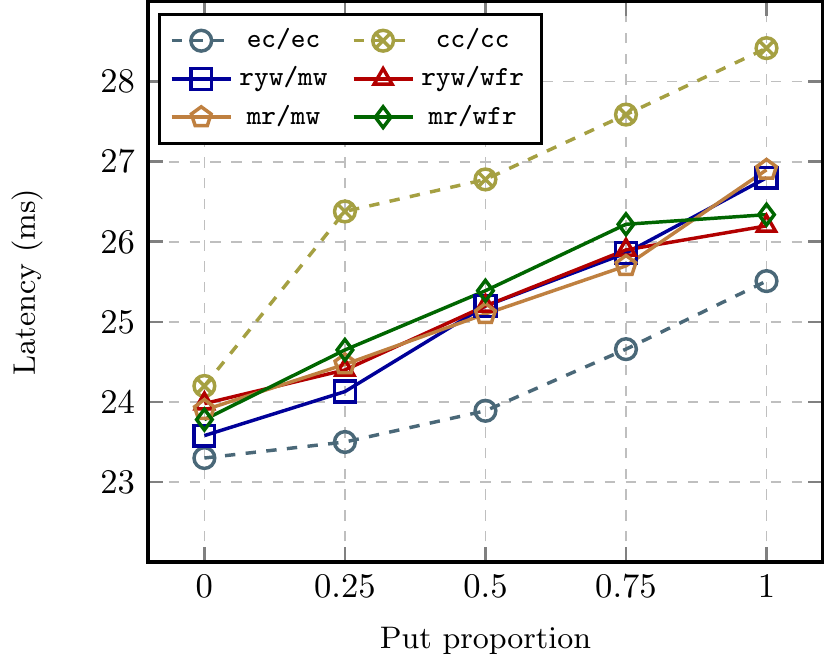}\label{fig:remote-ratio-latency}
    }
    \subfigure[Throughput]{
      \centering
      \includegraphics[width = 0.46\textwidth]{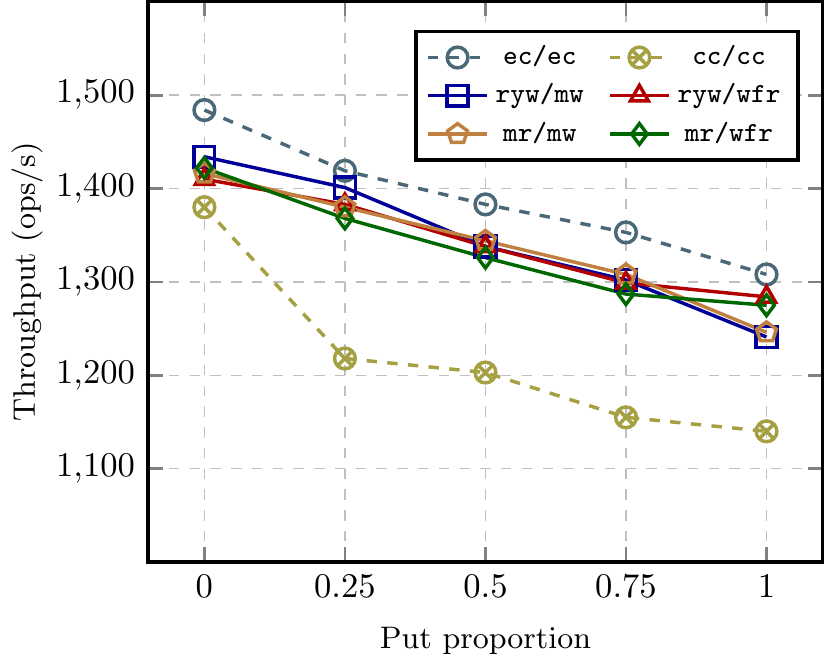}\label{fig:remote-ratio-throughput}
    }
    \caption{The effect of read/write ratio on the latency and throughput with 25\% remote traffic}\label{fig:remote-ratio}
    \end{minipage}
\end{figure*}


\begin{figure*}[h]
  \centering
    \begin{minipage}[h]{0.49\linewidth}
    \centering
    \subfigure[Latency]{
      \includegraphics[width = 0.46\textwidth]{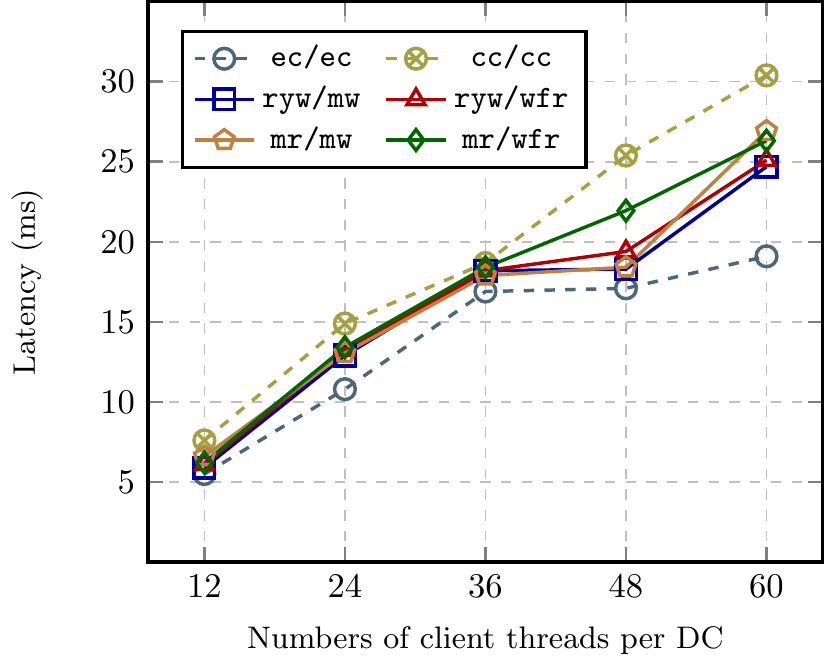}\label{fig:thread-latency}
    }
    \subfigure[Throughput]{
      \centering
      \includegraphics[width = 0.46\textwidth]{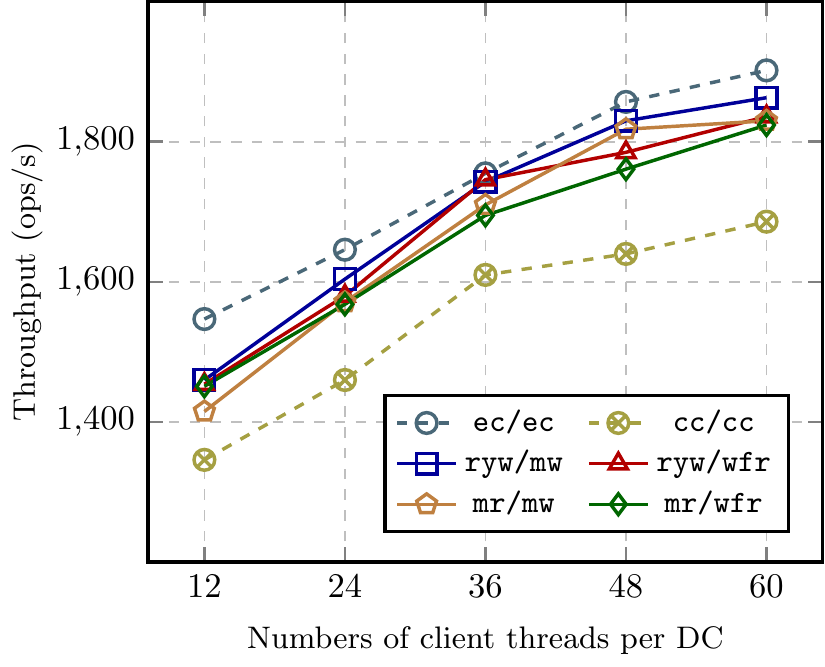}\label{fig:thread-throughput}
    }
    \caption{The effect of number of client threads on the latency and throughput with 100\% local traffic} \label{fig:load}
    \end{minipage}
    \begin{minipage}[h]{0.49\linewidth}
    \centering
    \subfigure[Latency]{
      \includegraphics[width = 0.46\textwidth]{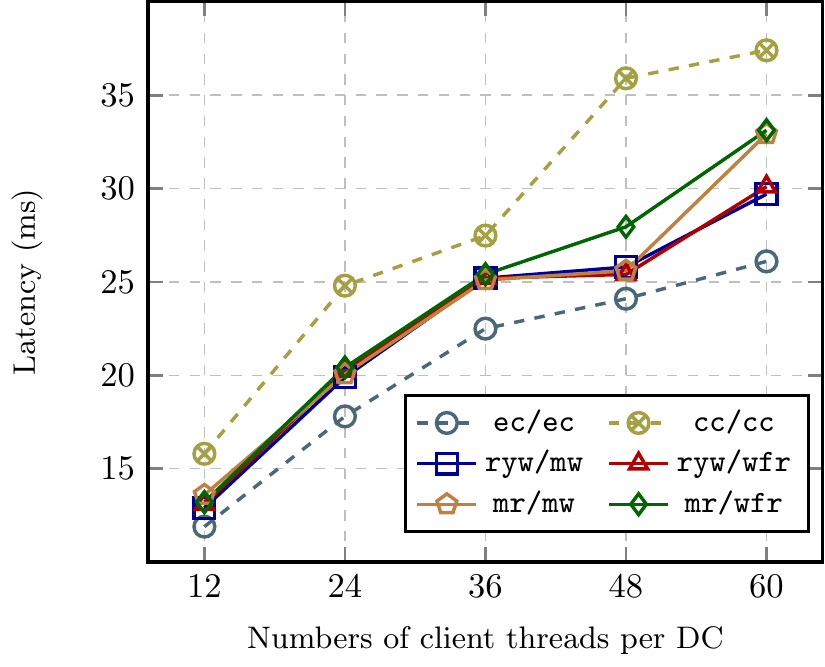}\label{fig:remote-thread-latency}
    }
    \subfigure[Throughput]{
      \centering
      \includegraphics[width = 0.46\textwidth]{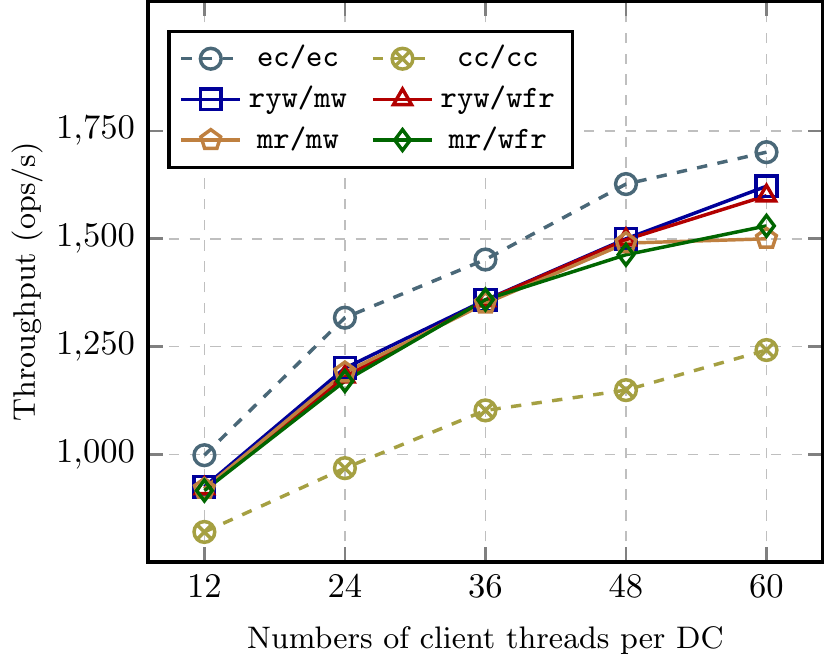}\label{fig:remote-thread-throughput}
    }
    \caption{The effect of number of client threads on the latency and throughput with 25\% remote traffic} \label{fig:remote-load}
    \end{minipage}
\end{figure*}

\subsection{Effect of Locality of Traffic} \label{ss:effect-locality}

In the first set of experiments, we investigate the effect of the locality of traffic
on the performance of \tccstore{}.
We consider six cases, $\ec/\ec$, $\cc/\cc$, $\ryw/\mw$,
$\ryw/\wfr$, $\mr/\mw$ and $\mr/\wfr$.
Each case represents the consistency levels that can be chosen by clients.
For example, in the $\ec/\ec$ (resp. $\cc/\cc$) case,
both levels for read and write operations
can only be eventual consistency (resp. causal consistency).
First, we consider these cases where
clients only access their local datacenters.

Fig.~\ref{fig:ratio-latency} shows the effect of write proportion
on the operation latency.
The results are for 36 client threads per datacenter.
As we expect, latency increases as the write proportion increases,
since write operations are more expensive than read operations.
Fig.~\ref{fig:ratio-throughput} shows how throughput changes as we change the write proportion.
In all diagrams with throughout, we report the number of
total operations done by the clients inside one datacenter.
As shown in Fig.~\ref{fig:ratio-throughput},
the throughput drops as we increase the write proportion.

We observe eventual consistency gains better performance compared with other cases.
However, the increased latency of providing any session guarantee
compared with eventual consistency is negligible (less than 1.5ms).
Causal consistency has the highest latency
and the lowest throughput in all cases.
It requires about up to 8\% higher latency
and has up to 6\% lower throughput
than other session guarantees.

Next, we take into account the cases with 25\% remote traffic.
Fig.~\ref{fig:remote-ratio-latency} and Fig.~\ref{fig:remote-ratio-throughput}
show the corresponding results.
In all cases, eventual consistency still has the best performance.
Unlike the cases where clients only access their local datacenters,
the difference between providing causal consistency
and session guarantees is obvious here.
Causal consistency requires about up to 10\% higher latency
and has up to 13\% lower throughput than other session guarantees.

\subsection{Effect of Workload} \label{ss:effect-load}

In this section, we want to see how the performance changes for
workloads with various number of client threads.
Fig.~\ref{fig:thread-latency} and Fig.~\ref{fig:thread-throughput}
show the effect of the number of client threads
on the operation latency and throughput, respectively.
We run 4, 8, 12, 16, and 20 client threads per partition
to generate different load conditions.
As we expect, both latency and throughput increase in all cases as
we run more client threads.

Again, we observe that eventual consistency has better performance,
and causal consistency still has the highest latency and lowest throughput in all cases.
It requires up to 34\% higher latency
and has up to 11\% lower throughput
than other session guarantees.

Next, we take into account the cases with 25\% remote traffic.
Fig.~\ref{fig:remote-thread-latency} and Fig.~\ref{fig:remote-thread-throughput}
show the corresponding results.
Again, we observe that the difference
between causal consistency and session guarantees is more obvious.
Causal consistency requires up to 40\% higher latency
and has up to 23\% lower throughput
than other session guarantees.


\begin{figure}[h]
  \centering
  \subfigure[Latency]{
    \begin{minipage}[h]{0.46\linewidth}
      \centering
      \includegraphics[width = 1.0\textwidth]{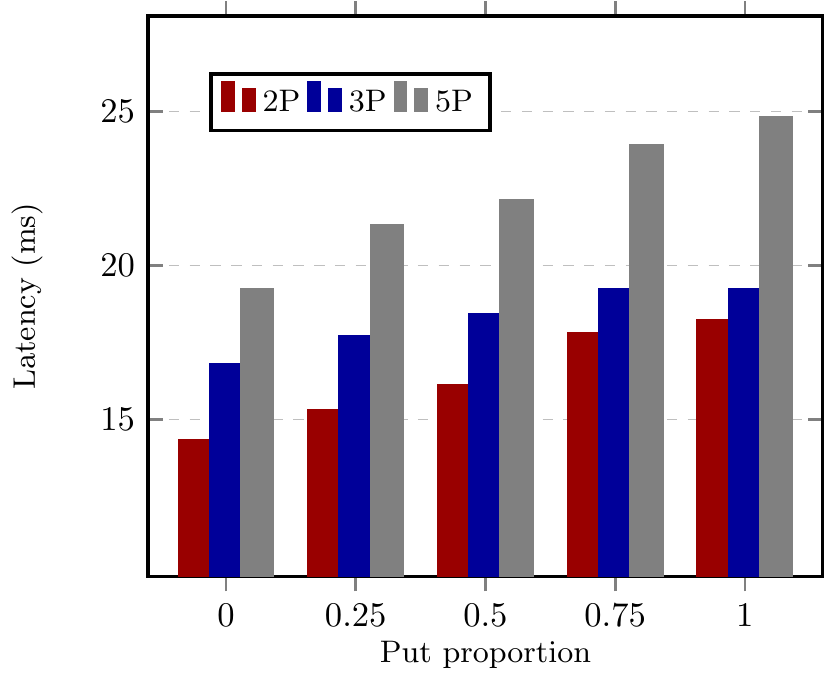}\label{fig:partition-latency}
    \end{minipage}
  }
  \subfigure[Throughput]{
    \begin{minipage}[h]{0.46\linewidth}
      \centering
      \includegraphics[width = 1.0\textwidth]{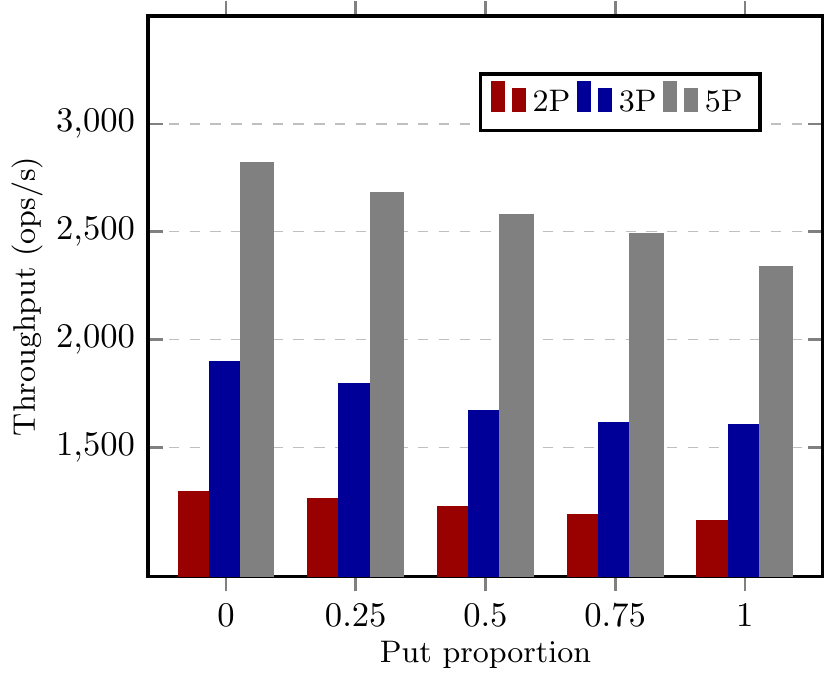}\label{fig:partition-throughput}
    \end{minipage}
  }
  \caption{The effect of deployment on the latency and throughput}
  \label{fig:partition}
\end{figure}

\subsection{Effect of Deployment} \label{ss:effect-deployment}

We examine how the performance of \tccstore{} changes with different number of partitions.
Fig.~\ref{fig:partition-latency} and Fig.~\ref{fig:partition-throughput} show the latency
and throughput achieved by \tccstore{} with 2, 3, and 5 partitions.
All the cases run 12 client threads per partition and
provide \mr{} level for reads and \wfr{} level for writes with 100\% local traffic.
We observe that \tccstore{} with 2 partitions gains better performance on latency.
However, more partitions gain better performance on throughput.
The throughput achieved with 5 partitions is about 2 and 1.5 times
higher than the throughput achieved with 2 and 3 partitions, respectively.


\section{Related Work} \label{section:related-work}

\emph{Specification Framework.} Burckhardt~\cite{PoEC:NOW2014} provides
the $(vis, ar)$ specification framework
for eventually consistent distributed data stores
based on the visibility and arbitration relations.
Bouajjani et al.~\cite{Multilevel:VMCAL2020} provides a framework for
specifying multilevel consistency. However, this framework is restricted to
only two consistency levels, and is specific to a concrete implementation of
data stores. In this work, we propose an implementation-independent
formalization of \tcc{} based on the four classic session guarantees.

\emph{Protocols of Tunable Consistency.}
Terry et al. present a protocol for ensuring session guarantees in~\cite{SESSION:PDIS1994},
which is adopted in Bayou~\cite{Bayou:SOSP1995}.
Bermbach et al.~\cite{Bermbach:ICCE2013} provide a middleware
to support \ryw{} and \mr{} on top of eventually consistent systems.
Roohitavaf et al.~\cite{Raft:DCN2019} present a protocol
for providing various session guarantees in distributed key-value stores.
To avoid slowdown cascades, they consider \emph{per-key} session guarantees.
Instead, we consider \emph{cross-key} session guarantees~\cite{SESSION:PDIS1994}.
Hence, we do not compare it in performance with our work in Section~\ref{section:exp}.

\emph{Systems Providing Tunable Consistency.}
Several popular data stores provide tunable consistency~\cite{DynamoDB:SOSP2007, Cassandra:SIGOPS2010, MongoDB:VLDB2019, CosmosDB:2018}.
Amazon DynamoDB~\cite{DynamoDB:SOSP2007} provides eventual consistency as the default
and stronger consistency with \texttt{ConsistentRead}.
Cassandra~\cite{Cassandra:SIGOPS2010} offers a number of fine-grained
read/write consistency levels such as \texttt{ANY}, \texttt{ONE}, \texttt{QUORUM}, \texttt{ALL}.
MongoDB~\cite{MongoDB:VLDB2019} exposes the \emph{readConcern} and \emph{writeConcern} parameters to clients,
which can be set per operation.
Azure Cosmos DB~\cite{CosmosDB:2018} offers five well-defined levels,
namely strong, bounded staleness, session, consistent prefix, and eventual.

\emph{Causally Consistent Systems.} There are plenty of research prototypes
and industrial deployments of causally consistent distributed data stores~\cite{COPS:SOSP2011,GentleRain:SOCC2014,Occult:NSDI2017,Contrarian:VLDB2018,OCC:TPDS2021}.
COPS~\cite{COPS:SOSP2011} explicitly tracks causal dependencies for updates.
GentleRain~\cite{GentleRain:SOCC2014} uses loosely synchronized physical clocks
to reduce the overhead of metadata and communication.
To reduce the delay of visibility of updates,
Occult~\cite{Occult:NSDI2017} and Contrarian~\cite{Contrarian:VLDB2018} use HLCs~\cite{HLC:ICPDS2014},
and POCC~\cite{OCC:TPDS2021} relies on both physical clocks and dependency vectors.


\section{Conclusion and Future Work} \label{section:conclusion}

In this paper, we propose tunable causal consistency (TCC). TCC allows clients
to choose the desired session guarantee for each operation. We first propose a
formal specification of TCC. Then we design a TCC protocol supporting multiple
keys and develop a prototype distributed key-value store. To make TCC more
practically useful, it would be beneficial to explore how to automatically
choose the appropriate session guarantee for each operation in applications.
\bibliographystyle{IEEEtran}
\bibliography{cure}

\clearpage
\appendices

\section{Protocol} \label{appendix:protocol}

Algorithms~\ref{alg:client} and~\ref{alg:server} show the core of our protocol,
handling \pput{} and \get{} operations at client and server, respectively.
Algorithm~\ref{alg:clock} shows the pseudocode for clock management.


\begin{algorithm}
  \caption{Client operations at client $c$}
  \label{alg:client}
  \begin{algorithmic}[1]

    \Statex
    \Function{\readkey}{$k, l$}
      \State $p^{i}_{d} \gets \partition(k)$
      \If{$l = \ec$}
        \State $vc_r \gets O$
        \State $vc_w \gets O$
      \ElsIf{$l = \mr$}
        \State $vc_r \gets \hrvc$
        \State $vc_w \gets O$
      \ElsIf{$l = \ryw$}
        \State $vc_r \gets O$
        \State $vc_w \gets \hwvc$
      \EndIf

      \State $\langle v, vc, gsvc \rangle \gets \send \Call{\readrequest}{k, vc_r, vc_w}$ to $p^{i}_{d}$

      \For{$j = 1 \ldots D$}
  \State $\hrvc[j] \gets \max \set{\hrvc[j], gsvc[j]}$ \label{line:hrv-update}
	\State $\cvc_{c}^r[j] \gets \max \set{\cvc_{c}^r[j], vc[j]}$ \label{line:cvcr-update}
	\label{line:ts-get}
      \EndFor
      \State \Return $v$
    \EndFunction

    \Statex
    \Function{\updatekey}{$k, v, l$}
      \State $p^{i}_{d} \gets \Call{partition}{k}$

      \If{$l = \ec$}
        \State $dvc \gets O$
      \ElsIf{$l = \mw$}
        \State $dvc \gets \cvc_c^w$
      \ElsIf{$l = \wfr$}
        \State $dvc \gets \cvc_c^r$
      \EndIf

      \State $vc \gets \send \Call{\updaterequest}{k, v, dvc}$ to $p^{i}_{d}$
      \For{$j = 1 \ldots D$}
      \State $\cvc_{c}^w[j] \gets \max\{\cvc_{c}^w[j], vc[j]\}$
      \label{line:ts-put}
      \State $\hwvc[j]  \gets \max\{\hwvc[j], vc[j]\}$
      \label{line:hwv-update}
      \EndFor
      \State \Return \ok
    \EndFunction
  \end{algorithmic}
\end{algorithm}


\begin{algorithm}
  \caption{Server operations at partition $p^{m}_{d}$}
  \label{alg:server}
  \begin{algorithmic}[1]

    \Statex
    \Function{\readrequest}{$k, vc_r, vc_w$}
      \State \wait \until $\forall i, \max \set{vc_r[i], vc_w[i]} \le \css^{m}_{d}[i]$
      \label{line:get-wait}
      \State $\langle v, vc \rangle \gets \Call{read}{k, \css^{m}_{d}, \store^{m}_{d}}$
    	\label{line:get-from}
            \Comment{\teal{\it returns the latest version for $k$ such that \purple{\emph{$vc \le \css^{m}_{d}$}}}}
      \State \Return $\langle v, vc, \css_m^d \rangle$
    \EndFunction

    \Statex
    \Function{\updaterequest}{$k, v, dvc$}
      \State \wait \until $dvc[d] < \clock^{m}_{d}$
    	\label{line:wait-clock}
      \hStatex
        \label{line:lc-put}
      \State $dvc[d] \gets \clock^{m}_{d}$
        \label{line:put-d-clock}
      \State $\pvc^{m}_{d}[d] \gets \clock^{m}_{d}$ \label{line:pvc-update}
      \State $\store^{m}_{d} \gets \store^{m}_{d} \cup \set{\langle k, v, dvc \rangle}$
    	\label{line:put}
      \State $\updates^{m}_{d} \gets \updates^{m}_{d} \cup \set{\langle k, v, dvc \rangle}$

      \hStatex
      \State \Return $dvc$
    \EndFunction

    \Statex
    \Function{\propagate}{\null}    \Comment{Run periodically}
      \State $\pvc^{m}_{d}[d] \gets \clock^{m}_{d}$

      \hStatex
      \If{$\updates^{m}_{d} \neq \emptyset$}
        \For{$\langle k, v, vc \rangle \in \updates^{m}_{d}$}  \Comment{in $vc[d]$ order}
          \For{$i = 1 \ldots D, i \neq d$}
            \State \send \Call{\replicate}{$d, k, v, vc$} to $p^{m}_{i}$
          \EndFor
        \EndFor
        \State $\updates^{m}_{d} \gets \emptyset$
      \Else
        \For{$i = 1 \ldots D, i \neq d$}
          \State \send \Call{\heartbeat}{$d, \pvc^{m}_{d}[d]$} to $p^{m}_{i}$
        \EndFor
      \EndIf
    \EndFunction

    \Statex
    \Function{\heartbeat}{$i, ts$}
      \State $\pvc^{m}_{d}[i] \gets ts$
    \EndFunction

    \Statex
    \Function{\replicate}{$i, k, v, vc$}
      \State $\pvc^{m}_{d}[i] \gets vc[i]$
      \State $\store^{m}_{d} \gets \store^{m}_{d} \cup \set{(k, v, vc)}$
    \EndFunction
  \end{algorithmic}
\end{algorithm}


\begin{algorithm}
  \caption{Clock management at $p^{m}_{d}$}
  \label{alg:clock}
  \begin{algorithmic}[1]
    \Function{\bcast}{\null}    \Comment{Run periodically}
      \For{$i = 1 \ldots N, i \neq m$}
        \State \send \Call{\updatecss}{$m, \pvc^{m}_{d}$} to $p^{i}_{d}$
      \EndFor
    \EndFunction

    \Statex
    \Function{\updatecss}{$i, pvc$}
      \State $\pmc^{m}_{d}[i] \gets pvc$
      \For{$j = 1 \ldots D, j \neq d$}
        \label{line:updatecss-no-d}
    	\State $\css^{m}_{d}[j] \gets \min\limits_{i = 1 \ldots N} \pmc^{m}_{d}[i][j]$
    	  \label{line:compute-css}
      \EndFor
    \EndFunction
  \end{algorithmic}
\end{algorithm}


\section{Correctness Proof}  \label{appendix:proof}

In this section, we prove the correctness of the protocol above.
We first define three sets of \pput{} events.
We define $\cwrites(c,t)$ to be
the set of all \pput{} events
done by client $c$ at time $t$.
Similarly, we define $\creads(c,t)$ to be
the set of all \pput{} events that
have written a value for some key
read by client $c$ at time $t$.
We also define $\swrites(s,t)$ to be
the set of all stable \pput{} events
known to $s$ in the local datacenter
when $c$'s request is applied at $s$ at time $t$.
Now we use the definitions above to
describe four session guarantees
defined in Section~\ref{section:tcc}.

Read your writes (\ryw): Let $g$ be a read by client $c$ at time $t$.
Event $g$ satisfies \ryw{}
if any $p\in \cwrites(c,t)$ is included in $\swrites(s,t')$
, where time $t'$ is when $g$ is applied at partition $s$.

Monotonic read (\mr): Let $g$ be a read by client $c$ at time $t$.
Event $g$ satisfies \mr{}
if any $p\in \creads(c,t)$ is included in $\swrites(s,t')$
, where time $t'$ is when $g$ is applied at partition $s$.

Monotonic write (\mw): Let $p$ be a write by client $c$ at time $t$.
Partition $s$ applies it at time $t'$.
Event $p$ satisfies \mw{}
if $p$ is included in $\swrites(s,t')$,
there is no \pput{} event $e$ ($e \neq p$) such that
$\cwrites(c,t)$ includes $e$
and $vc_p$ is smaller than $vc_e$.

Writes follow read (\wfr): Let $p$ be a write by client $c$ at time $t$.
Partition $s$ applies it at time $t'$.
Event $p$ satisfies \wfr{}
if $p$ is included in $\swrites(s,t')$,
there is no \pput{} event $e$ ($e \neq p$) such that
$\creads(c,t)$ includes $e$
and $vc_p$ is smaller than $vc_e$.




Now, we prove the correctness of the protocol
for session guarantees constraining \get{} events.

\begin{theorem}[$\mr$]  \label{thm:mr}
  Any \get{} event reading a key by client $c$
  at partition $p_{d}^{m}$ with $\hrvc$ satisfies $\mr$.
\end{theorem}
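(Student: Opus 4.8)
The plan is to unfold the operational definition of $\mr$ given above — that every $p \in \creads(c,t)$ lies in $\swrites(s,t')$, where $s = p_d^m$ is the partition at which the read $g$ is applied and $t'$ is its application time — into a short chain of componentwise vector-clock inequalities, and only then invoke the replication machinery to turn the final inequality into genuine stability. First I would fix the read event $g$ issued by $c$ at time $t$; since $g$ requests level $\mr$, the client code (\readkey) sets $vc_r = \hrvc$ and $vc_w = O$. I would then fix an arbitrary $p \in \creads(c,t)$, writing key $k_p$, originating in datacenter $j_p$ on partition $n = \partition(k_p)$, with update timestamp $vc_p$, and aim to show $\css^m_d(t') \ge vc_p$.

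The argument rests on two invariants. The first is a \emph{domination invariant}: at any time $t$, $\hrvc_c(t) \ge vc_q$ for every $q \in \creads(c,t)$, regardless of the level of the read that observed $q$. To establish it I would inspect a single \get: when $c$ read the version written by $q$, the server's \readrequest{} handler returned some $\css^m_d$ together with a version whose timestamp satisfies $vc \le \css^m_d$ by the read rule, so $vc_q = vc \le \css^m_d$; the client then performs $\hrvc[j] \gets \max\{\hrvc[j], \css^m_d[j]\}$ for every $j$, so immediately afterwards $\hrvc \ge vc_q$. Because $\hrvc$ is only ever updated by componentwise $\max$, it is monotonically non-decreasing, hence $\hrvc_c(t) \ge vc_q$. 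The second invariant is server-side: the \readrequest{} handler blocks on \textbf{wait until} $\forall i,\ \max\{vc_r[i], vc_w[i]\} \le \css^m_d[i]$, so at the application time $t'$ we have $\css^m_d(t') \ge vc_r = \hrvc_c(t)$.

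Chaining the two invariants yields $\css^m_d(t') \ge \hrvc_c(t) \ge vc_p$. It then remains to convert this into $p \in \swrites(s,t')$. Here I would use the maintenance of $\css$: \updatecss{} sets $\css^m_d[j] = \min_{i} \pmc^m_d[i][j] = \min_i \pvc_d^i[j]$, so $\css^m_d(t') \ge vc_p$ forces $\pvc_d^{n}[j_p] \ge vc_p[j_p]$ at the partition $n = \partition(k_p)$ holding $p$'s key in the local datacenter $d$. Since updates from a given remote partition are replicated in $vc[d]$ (timestamp) order and a \heartbeat{} carries the clock value only after all earlier pending updates have been shipped, $\pvc_d^{n}[j_p] \ge vc_p[j_p]$ implies $p_d^{n}$ has already received and inserted $p$'s version into its $\store$, so $p$ is stable and belongs to $\swrites(s,t')$.

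I expect the main obstacle to be precisely this last step. First, it is inherently \emph{cross-partition}: $p$ may write a key on a partition different from the one serving $g$, so I must argue that the datacenter-wide stable view tracked by $s$ — the aggregate minimum $\min_i \pvc_d^i$ — correctly under-approximates what the partition $p_d^{n}$ actually holds, which is exactly where the $\min$-computation of $\css$ and its monotonicity are needed. Second, turning the scalar inequality $\pvc_d^{n}[j_p] \ge vc_p[j_p]$ into ``$p$ has been applied'' requires a FIFO/in-order delivery lemma for the replication channel, together with the fact that writes originating in $j_p$ are timestamped monotonically in coordinate $j_p$. I would therefore isolate this stability property as a separate lemma, so that the proof of the theorem itself reduces cleanly to the domination invariant and the server wait condition.
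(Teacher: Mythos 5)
Your proposal is correct, and its engine is the same as the paper's: the domination invariant ($\hrvc \ge vc_q$ for every $q \in \creads(c,t)$, obtained from the read rule $vc \le \css^{m}_{d}$ at Line~\ref{line:get-from} together with the max-update at Line~\ref{line:hrv-update} and monotonicity of $\hrvc$) chained with the server wait condition (Line~\ref{line:get-wait}, giving $\css^{m}_{d}(t') \ge vc_r = \hrvc$), which yields $\css^{m}_{d}(t') \ge vc_p$; the paper packages this as a proof by contradiction while you argue directly, an immaterial difference. The genuine divergence is your final step. The paper in effect identifies $\swrites(p^{m}_{d},t')$ with $\set{e \mid vc_e \le \css^{m}_{d}(t')}$: its hypothesis $e \notin \swrites(p^{m}_{d},t')$ is cashed out in one line as $\css^{m}_{d}[i] < vc_e[i]$ for some $i$, so no replication reasoning appears in its proof at all. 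You instead read $\swrites$ operationally (the version physically received and stable in the local datacenter) and prove the bridge: the computation $\css^{m}_{d}[j] = \min_{i} \pmc^{m}_{d}[i][j]$ at Line~\ref{line:compute-css}, together with monotonicity of $\pvc$ (needed because $\pmc$ may lag the true $\pvc^{n}_{d}$), forces $\pvc^{n}_{d}[j_p] \ge vc_p[j_p]$ at the partition $n$ holding $p$'s key, and in-order replication in $vc[d]$ order, with heartbeats sent only when the $\updates$ set is empty, converts that scalar bound into actual receipt of $p$'s version. Your decomposition is strictly more informative: it exposes the cross-partition content of the guarantee and the FIFO-delivery assumption that the paper's definitional shortcut hides --- exactly the lemma you flagged as the main obstacle --- at the cost of a longer proof. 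One point to nail down when you isolate that lemma: it needs timestamps in coordinate $j_p$ to be strictly increasing across successive puts at a partition and strictly ahead of any heartbeat value already sent; since Line~\ref{line:wait-clock} only constrains the \emph{client-supplied} $dvc[d]$ against the local clock, a tie between a freshly assigned timestamp (Line~\ref{line:put-d-clock}) and a concurrently dispatched heartbeat carrying the same clock value is not obviously excluded, so the claim that $\pvc^{n}_{d}[j_p] \ge vc_p[j_p]$ implies ``$p$ already delivered'' deserves an explicit clock-advance or tie-breaking argument.
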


\begin{proof}
  Let $g$ be a \get{} event by client $c$
  reading the value of key $k$ at time $t$
  at partition $p_{d}^{m}$ and returns at time $t'$.
  Since $g$ returns at time $t'$, we have $\css_{d}^{m} \ge \hrvc$ at time $t'$
  according to Line~\ref{line:get-wait} of Algorithm~\ref{alg:server}.
  Suppose by a contradiction that
  $g$ does not satisfy $\mr$.
  This means that there exists a \pput{} event $e$ such that
  $e \in \creads(c,t)$ and $e \notin \swrites(p_{d}^{m},t')$.
  Since $e \notin \swrites(p_{d}^{m},t')$, there is some datacenter $i$ such that
  $\css_{d}^{m}[i] < vc_e[i]$ at time $t'$ according to Line~\ref{line:get-wait} of Algorithm~\ref{alg:server}.
  Since $e \in \creads(c,t)$, $\hrvc \ge vc_e$
  according to Line~\ref{line:hrv-update} of Algorithm~\ref{alg:client}
  and Line~\ref{line:get-from} of Algorithm~\ref{alg:server}.
  Thus, $\css_{d}^{m} \ge vc_e$ at time $t'$.
  Contradiction.
\end{proof}

We can prove the following theorem in a similar way.

\begin{theorem}[$\ryw$]  \label{thm:ryw}
  Any \get{} event reading a key by client $c$
  at partition $p_{d}^{m}$ with $\hwvc$ satisfies $\ryw$.
\end{theorem}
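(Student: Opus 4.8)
The plan is to mirror the proof of Theorem~\ref{thm:mr} almost verbatim, replacing the read vector clock $\hrvc$ and the read set $\creads$ by their write counterparts $\hwvc$ and $\cwrites$. The key observation driving the whole argument is the same blocking condition on Line~\ref{line:get-wait} of Algorithm~\ref{alg:server}: a \get{} event does not return until the server's stable clock $\css^{m}_{d}$ dominates both incoming vectors $vc_r$ and $vc_w$. For a \get{} issued with consistency level $\ryw$, the client sets $vc_w \gets \hwvc$ (and $vc_r \gets O$), so upon return we have $\css^{m}_{d} \ge \hwvc$ at time $t'$.

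First I would fix notation: let $g$ be a \get{} event by client $c$ reading key $k$ at partition $p^{m}_{d}$, issued at time $t$ and returning at time $t'$, with level $\ryw$ so that $vc_w = \hwvc$. From the wait-until guard I record that $\css^{m}_{d} \ge \hwvc$ holds at time $t'$. Then I argue by contradiction: suppose $g$ does not satisfy $\ryw$, i.e.\ there exists a \pput{} event $e$ with $e \in \cwrites(c,t)$ but $e \notin \swrites(p^{m}_{d},t')$. Since $e$ is not among the stable writes known to the server, the guard on Line~\ref{line:get-wait} forces some datacenter index $i$ with $\css^{m}_{d}[i] < vc_e[i]$ at time $t'$.

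The second half closes the gap from the client side. Because $e \in \cwrites(c,t)$, the client must have already received the reply to the \pput{} that produced $e$ and updated its highest write clock accordingly. Concretely, Line~\ref{line:hwv-update} of Algorithm~\ref{alg:client} (together with the returned $dvc$ of \updaterequest{} in Algorithm~\ref{alg:server}) guarantees $\hwvc \ge vc_e$ at the time $g$ is issued, hence at time $t'$. Combining with $\css^{m}_{d} \ge \hwvc$ yields $\css^{m}_{d} \ge vc_e$ at time $t'$, contradicting $\css^{m}_{d}[i] < vc_e[i]$. This contradiction establishes the claim.

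The step I expect to require the most care is the second one, namely justifying $\hwvc \ge vc_e$ for every $e \in \cwrites(c,t)$. For the \mr{} theorem the analogous fact followed from the \get{} reply path (Lines~\ref{line:hrv-update} of Algorithm~\ref{alg:client} and~\ref{line:get-from} of Algorithm~\ref{alg:server}); here I must instead appeal to the \pput{} reply path and the monotone $\max$-update of $\hwvc$, and I should make explicit that $\cwrites(c,t)$ counts only writes whose replies have already been processed by $c$ before time $t$, so that their timestamps are indeed folded into $\hwvc$. Everything else is a routine transcription of the $\mr$ argument, and the theorem's own remark ``We can prove the following theorem in a similar way'' signals that no new ideas are needed beyond this substitution.
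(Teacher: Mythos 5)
Your proposal is correct and is exactly the argument the paper intends: the paper omits the proof of Theorem~\ref{thm:ryw}, stating it follows ``in a similar way'' from Theorem~\ref{thm:mr}, and your transcription---using $vc_w = \hwvc$ in the guard of Line~\ref{line:get-wait} of Algorithm~\ref{alg:server} and the $\max$-update of $\hwvc$ on Line~\ref{line:hwv-update} of Algorithm~\ref{alg:client} in place of the read-side counterparts---is precisely that substitution. Your added care about $\cwrites(c,t)$ containing only writes whose replies $c$ has processed before $t$ is a reasonable (and correct) explicit reading of the paper's definitions.
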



Now, we prove the correctness of the protocol
for session guarantees constraining \pput{} events.

\begin{theorem}[$\mw$]  \label{thm:mw}
  Any \pput{} event writing a key by client $c$
  at partition $p_{d}^{m}$ with $\cvc_c^w$ satisfies $\mw$.
\end{theorem}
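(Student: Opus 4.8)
The plan is to mirror the structure of the proof of Theorem~\ref{thm:mr}, arguing by contradiction using the relationship between the client's $\cvc_c^w$ vector and the partition's $\pvc$/$\css$ clocks. First I would fix a \pput{} event $p$ by client $c$ at time $t$, applied at partition $s = p_d^m$ at time $t'$, with dependency vector $dvc = \cvc_c^w$ (this is the choice the client makes for level \mw{} in Algorithm~\ref{alg:client}). The goal, per the definition of \mw{} given just before the theorems, is to show that $p$ is included in $\swrites(s,t')$ and that there is no \pput{} event $e \neq p$ with $e \in \cwrites(c,t)$ whose timestamp $vc_e$ strictly exceeds $vc_p$ in the relevant coordinate.

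Next I would trace how $\cvc_c^w$ accumulates the timestamps of all of $c$'s prior writes. By Line~\ref{line:ts-put} of Algorithm~\ref{alg:client}, after each \pput{} the client takes the entrywise maximum of $\cvc_c^w$ with the returned $vc$, so at time $t$ we have $\cvc_c^w \ge vc_e$ for every $e \in \cwrites(c,t)$. When $p$ is issued, the server sets $dvc[d] \gets \clock_d^m$ only after waiting until $dvc[d] < \clock_d^m$ (Line~\ref{line:wait-clock}), so the new timestamp $vc_p$ assigned to $p$ satisfies $vc_p[d] > dvc[d] = \cvc_c^w[d] \ge vc_e[d]$ for every earlier write $e$ of $c$ on datacenter $d$. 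For the remaining coordinates $i \neq d$, the inherited $dvc[i] = \cvc_c^w[i]$ is left unchanged, so $vc_p \ge \cvc_c^w \ge vc_e$ entrywise. This establishes that no prior write $e$ of $c$ has $vc_e$ strictly larger than $vc_p$, which is the ordering clause of the \mw{} definition.

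For the membership clause $p \in \swrites(s,t')$, I would argue as in Theorem~\ref{thm:mr}: since a write becomes stable once $\css$ advances past its timestamp, and replication together with the $\css$ computation (Line~\ref{line:compute-css} of Algorithm~\ref{alg:clock}) eventually propagates $p$'s update, the event $p$ is counted among the stable writes known to $s$ at the time its effect is observed. Combining the two clauses gives $A \models \mathcal{C}^{\mw}$ for this event.

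The main obstacle I anticipate is the coordination between physical and logical time, specifically justifying that the single-coordinate update $vc_p[d] \gets \clock_d^m$ together with the inherited dependency coordinates really dominates every prior write of $c$ across \emph{all} datacenters, not just the local one. An earlier write $e$ of $c$ could have been executed at a \emph{different} partition/datacenter, so I must be careful that the returned $vc$ for $e$ was folded into $\cvc_c^w$ before $p$ was issued (which holds because $\so$ orders $e$ before $p$ and the client updates $\cvc_c^w$ synchronously on each reply). Making this monotonicity argument airtight — that $dvc$ inherited from $\cvc_c^w$ bounds all cross-datacenter coordinates while the wait-condition strictly advances the local one — is the crux; the $\swrites$ membership part is then essentially a restatement of the stability argument already used for \mr{}.
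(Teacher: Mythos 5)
Your proposal is correct and follows the paper's proof essentially verbatim in its core: both arguments derive $\cvc_c^w \ge vc_e$ for every $e \in \cwrites(c,t)$ from Line~\ref{line:ts-put} of Algorithm~\ref{alg:client}, then use the wait condition of Line~\ref{line:wait-clock} and the assignment of Line~\ref{line:put-d-clock} of Algorithm~\ref{alg:server} to conclude that $vc_p$ dominates $\cvc_c^w$ (strictly in coordinate $d$, with the coordinates $i \neq d$ inherited from $dvc = \cvc_c^w$), hence no prior write $e$ of $c$ can have $vc_p < vc_e$. The only divergence is your extra (and admittedly vague, ``eventually''-based) treatment of the membership clause $p \in \swrites(p_{d}^{m}, t')$, which the paper's proof does not establish at all --- it reads that clause as part of the setup of the \mw{} definition (the condition at the moment $p$ is applied) rather than as an obligation --- so you can drop it without loss.
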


\begin{proof}
  Let $p$ be a \pput{} event by client $c$
  writing to key $k$ at time $t$
  at partition $p_{d}^{m}$ and returns at $t'$.
  Suppose by a contradiction that
  $p$ does not satisfy $\mw$.
  This means that there exists
  a \pput{} event $e$ ($e \neq p$) such that
  $e \in \cwrites(c,t)$ and $vc_p < vc_e$.
  We have $\cvc_c^{w} \ge vc_e$ at time $t$
  according to Line~\ref{line:ts-put} of Algorithm~\ref{alg:client}.
  By Line~\ref{line:wait-clock} and Line~\ref{line:put-d-clock} of Algorithm~\ref{alg:server},
  we have $vc_p > \cvc_c^w$ at time $t'$. Thus, $vc_p > vc_e$.
  Contradiction.
\end{proof}

We can prove the following theorem in a similar way.

\begin{theorem}[$\wfr$]  \label{thm:wfr}
  Any \pput{} event writing a key by client $c$
  at partition $p_{d}^{m}$ with $\cvc_c^r$ satisfies $\wfr$.
\end{theorem}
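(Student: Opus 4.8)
The plan is to mirror the proof of Theorem~\ref{thm:mw} for $\mw$, substituting the read vector clock $\cvc_c^r$ for the write vector clock $\cvc_c^w$ and the set $\creads(c,t)$ for $\cwrites(c,t)$. I would argue by contradiction. Let $p$ be a \pput{} event by client $c$ writing key $k$ at time $t$ on partition $p_d^m$ and returning at time $t'$, and suppose that $p$ does not satisfy $\wfr$. By the definition of $\wfr$, this gives a \pput{} event $e \neq p$ with $e \in \creads(c,t)$ and $vc_p < vc_e$.

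I would then establish two opposing bounds relating $vc_p$ and $vc_e$ to $\cvc_c^r$. For the lower bound, since $e \in \creads(c,t)$, client $c$ has read a version written by $e$; the update at Line~\ref{line:cvcr-update} of Algorithm~\ref{alg:client}, namely $\cvc_c^r[j] \gets \max\{\cvc_c^r[j], vc[j]\}$, therefore ensures $\cvc_c^r \ge vc_e$ at time $t$. For the upper bound, a $\wfr$ write takes $dvc \gets \cvc_c^r$, and at the server Line~\ref{line:wait-clock} and Line~\ref{line:put-d-clock} of Algorithm~\ref{alg:server} overwrite $dvc[d]$ with $\clock_d^m$, which strictly exceeds the previous $d$-th entry while leaving every other entry unchanged; hence $vc_p > \cvc_c^r$. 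Combining the two bounds gives $vc_p > \cvc_c^r \ge vc_e$, so $vc_p > vc_e$, contradicting $vc_p < vc_e$.

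The main obstacle, exactly as in the $\mw$ case, is the careful handling of vector-clock snapshots across the two time points $t$ and $t'$. I must argue that the instance of $\cvc_c^r$ used to form $dvc$ is the one bounded below by $vc_e$; this holds because $\cvc_c^r$ is modified only by componentwise $\max$ and is thus monotonically nondecreasing, so its value can only have grown between the read that recorded $e$ and the issuing of $p$. A second, minor subtlety is the meaning of the strict vector-clock comparison: $vc_p > \cvc_c^r$ holds in the dominance order because $p$ advances precisely the $d$-th coordinate beyond $\cvc_c^r[d]$ and preserves the rest, and transitivity with $\cvc_c^r \ge vc_e$ yields strict dominance $vc_p > vc_e$, which is incompatible with $vc_p < vc_e$ by antisymmetry.
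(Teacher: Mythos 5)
Your proof is correct and is precisely the argument the paper intends: the paper proves Theorem~\ref{thm:wfr} by remarking that it follows ``in a similar way'' to Theorem~\ref{thm:mw}, and your substitution of $\cvc_c^r$ for $\cvc_c^w$, $\creads(c,t)$ for $\cwrites(c,t)$, and Line~\ref{line:cvcr-update} for Line~\ref{line:ts-put} carries out exactly that adaptation. Your extra remarks on the monotonicity of $\cvc_c^r$ and on the fact that $vc_p$ strictly exceeds $\cvc_c^r$ only in the $d$-th coordinate make explicit two points the paper's \mw{} proof glosses over, but they do not change the route.
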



\end{document}